\documentclass[11pt]{article}

\usepackage[margin=1in]{geometry}
\usepackage{amsmath,amssymb,amsthm,mathtools}
\usepackage[T1]{fontenc}
\usepackage{microtype}
\microtypesetup{expansion=false}
\usepackage{enumitem}
\usepackage{xcolor}
\usepackage{booktabs}
\usepackage{tikz,pgfplots}
\pgfplotsset{compat=1.18}
\usepgfplotslibrary{fillbetween}
\usetikzlibrary{positioning,fit,calc,arrows.meta}
\usepackage{hyperref}
\usepackage[nameinlink,capitalize,noabbrev]{cleveref}
\usepackage{aliascnt}
\usepackage[normalem]{ulem}

\definecolor{linkgreen}{RGB}{0,90,160}
\definecolor{linkblue}{RGB}{0,75,180}
\definecolor{accent}{RGB}{0,120,105}
\hypersetup{colorlinks=true, linkcolor=linkblue, citecolor=linkblue, urlcolor=linkblue}

\newtheorem{theorem}{Theorem}[section]
\newaliascnt{lemma}{theorem}
\newtheorem{lemma}[lemma]{Lemma}
\aliascntresetthe{lemma}
\newaliascnt{proposition}{theorem}
\newtheorem{proposition}[proposition]{Proposition}
\aliascntresetthe{proposition}
\newaliascnt{corollary}{theorem}

\aliascntresetthe{corollary}
\theoremstyle{definition}
\newaliascnt{definition}{theorem}
\newtheorem{definition}[definition]{Definition}
\aliascntresetthe{definition}
\newaliascnt{remark}{theorem}

\aliascntresetthe{remark}

\crefname{theorem}{Theorem}{Theorems}\Crefname{theorem}{Theorem}{Theorems}
\crefname{lemma}{Lemma}{Lemmas}\Crefname{lemma}{Lemma}{Lemmas}
\crefname{proposition}{Proposition}{Propositions}\Crefname{proposition}{Proposition}{Propositions}
\crefname{corollary}{Corollary}{Corollaries}\Crefname{corollary}{Corollary}{Corollaries}
\crefname{definition}{Definition}{Definitions}\Crefname{definition}{Definition}{Definitions}
\crefname{remark}{Remark}{Remarks}\Crefname{remark}{Remark}{Remarks}

\newcommand{\st}{\mathsf{st}}
\newcommand{\yes}{\textup{\textsc{yes}}}
\newcommand{\no}{\textup{\textsc{no}}}
\newcommand{\Dyes}{\mathcal{D}_{\mathsf{y}}}
\newcommand{\Dno}{\mathcal{D}_{\mathsf{n}}}
\newcommand{\dtv}{d_{\mathrm{TV}}}
\newcommand{\trans}{\mathsf{trans}}
\newcommand{\eps}{\varepsilon}
\newcommand{\nb}{\mathrm{nb}}
\newcommand{\touch}{\mathrm{Touch}}
\newcommand{\tr}{\mathrm{tr}}
\DeclareMathOperator*{\Ex}{\mathbb{E}}
\DeclareMathOperator{\Prob}{\mathbb{P}}

\title{A Better Lower Bound for Sublinear Metric Steiner Tree}
\author{Sepideh Mahabadi\thanks{Microsoft Research. Email: \texttt{smahabadi@microsoft.com}.}
  \and
  Jakub Tarnawski\thanks{Microsoft Research. Email: \texttt{jatarnaw@microsoft.com}.}
  \and
  Ali Vakilian\thanks{Rutgers University. Email: \texttt{a.vakilian@rutgers.edu}. Supported in part by NSF CAREER Award CCF-2542925.}
}
\date{}

\begin{document}
\maketitle

\begin{abstract}
In the metric Steiner tree problem, we are given an $n$-point metric space $(V,w)$ and a set of terminals $T\subseteq V$, and we are interested in the smallest total weight $\st(V,T,w)$ of a tree that contains every terminal and may also use points of $V\setminus T$.  We work in the distance-oracle model: the algorithm knows $V$ and $T$ in advance and learns $w$ only by querying a single entry $w(u,v)$ at a time. The goal is to estimate $\st(V,T,w)$ with the minimum number of queries.

A factor of $2$ is achievable with nearly linear number of queries, so the interesting question is the query complexity of algorithms with an approximation ratio strictly below $2$.  We prove that for every fixed $\eta\in(0,1)$, every randomized $(2-\eta)$-approximation algorithm must make more than $\widetilde{\Omega}(n^{7/5})$ queries, improving upon the previously known bound of $\widetilde{\Omega}(n^{6/5})$.  Given that the best known upper bound shows that $\widetilde O(n^{5/3})$ queries suffice, the exponent of this problem now lies between $7/5$ and $5/3$, up to logarithmic factors.
\end{abstract}

\section{Introduction}

The {\em Steiner tree} problem asks for a minimum-cost network connecting a prescribed set of terminals, with the option of using additional nonterminal vertices to reduce the cost. It is one of the oldest problems in combinatorial optimization and one of Karp's original NP-complete problems~\cite{Karp72}. 
It is also among the most well-studied problems in combinatorial optimization: the minimum spanning tree on the terminals is a $2$-approximation for Steiner tree, and the best ratio known is $\ln 4+\eps\approx 1.39$~\cite{BGRS13}. Further, no polynomial-time algorithm can achieve a ratio better than $96/95$ unless $\mathrm{P}=\mathrm{NP}$~\cite{CC08}. 
Steiner tree and its variants have also been studied in a broad range of computational models beyond the standard offline setting, including streaming~\cite{CJKV22},
dynamic~\cite{IW91,GK14,LOPSZ15}, online~\cite{IW91,GuGK13,AAB96,BC97,BDM22}, and sublinear-query
models~\cite{CKT23,MRTV25}.

\smallskip
An instance of the \emph{metric Steiner tree} problem consists of a finite set of points $V$, a
metric $w$ on $V$, and a set of \emph{terminals} $T\subseteq V$.  A feasible solution is a tree in
the complete graph on $V$ whose vertex set contains $T$, and the points of $V\setminus T$ that it can optionally use.  We write $\st(V,T,w)$ for the smallest weight of a feasible solution, and throughout
we put $n=|V|$ and $k=|T|$.

\smallskip
We study the problem in the \emph{distance-oracle} model: an algorithm may query any unordered pair
$\{u,v\}\subseteq V$ and is told $w(u,v)$, and its cost is the number of such queries.  The input
has $\binom n2$ entries, so the question of interest is what can be done with far fewer than $n^2$
queries.

A factor of $2$ is the natural baseline in this model. Indeed, the minimum spanning tree of the metric restricted to $T$ has weight between $\st(V,T,w)$ and $2\,\st(V,T,w)$, and hence gives a $2$-estimation if computed exactly. Moreover, by a result of Czumaj and Sohler~\cite{CS09}, its weight can be estimated within a factor of $1+\eps$ using $\widetilde O(k)$ queries.
Consequently, a $(2+\eps)$-estimation is achievable with near-linear query complexity. The natural question is therefore what the best achievable running time or query complexity is for obtaining an approximation factor strictly below $2$.

Chen, Khanna and Tan~\cite{CKT23} began this line of work and settled most of the landscape.  They
showed that getting within a factor $5/3-\eps$ needs $\Omega_\eps(n^2)$ queries even when
$k=O(1)$. They further showed that \emph{returning} a tree of weight strictly less than $2\,\st(V,T,w)$ needs $\Omega(nk)$
queries, matching an $O(nk)$ upper bound. Finally, they showed that some constant factor
$2-\eps_0$ can be \emph{estimated} with $\widetilde O(n^{13/7})=o(n^2)$ queries.  Mahabadi,
Roghani, Tarnawski and Vakilian~\cite{MRTV25} improved the last bound to $\widetilde O(n^{5/3})$.
For lower bounds on estimation below $2$, the only previously known result is that of
\cite{CKT23}: $\widetilde\Omega(n+k^{6/5})$ queries are needed, which reads as
$\widetilde\Omega(n^{6/5})$ when $k=\Theta(n)$.
Estimating below the factor $2$ is thus the one task in this landscape whose complexity remains open.

The metric Steiner tree belongs to a broader line of work on graph optimization problems in the distance-oracle model, including minimum spanning tree (MST)~\cite{chazelle2005approximating,CS09}, maximal independent set (MIS)~\cite{nguyen2008constant,yoshida2009improved,mahabadi2026sublinear}, matching~\cite{yoshida2009improved,kapralov2020space, Behnezhad21,BehnezhadRR23a,BehnezhadRRS-SODA23,azarmehr2025tight,MahabadiRT25,behnezhad2024approximating,bhattacharya2023sublinear}, metric Steiner Forest~\cite{mahabadi2026sublinear}, and metric TSP~\cite{CKK20,chen2023sublinear}.

\subsection{Our results}
We now state our result and place it in the known query-complexity landscape for metric Steiner tree. We show that, when $k=\Theta(n)$, any fixed improvement over factor $2$ requires $\Omega(n^{7/5}/\log n)$ queries.  
Throughout, we consider the estimation task. 

\begin{theorem}\label{thm:main}
For every constant $\eta\in(0,1)$ there are constants $c_\eta>0$ and $n_\eta$ such that the
following holds for every $n\ge n_\eta$.  There is a distribution over $n$-point metric Steiner tree
instances $(V,T,w)$, with $|T|=\Theta_\eta(n)$ and with every distance between two distinct points
lying in $\{\Theta_\eta(n^{-3}),1,2\}$, on which every deterministic algorithm that returns a number
$\widehat C$ with
\[
  \st(V,T,w)\ \le\ \widehat C\ \le\ (2-\eta)\,\st(V,T,w).
\]
and making at most $c_\eta n^{7/5}/\log n$ queries fails with probability greater than $1/3$.
Consequently, every randomized algorithm that returns such a number with probability at least $2/3$
on every instance makes more than $c_\eta n^{7/5}/\log n$ queries.
\end{theorem}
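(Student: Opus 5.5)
We use Yao's principle. The plan is to build two distributions $\Dyes$ and $\Dno$ over instances whose distances lie in $\{\delta,1,2\}$, with $\delta=\Theta_\eta(n^{-3})$, and to argue two things. First, every instance in the support of $\Dyes$ has $\st\le(1+o(1))\,k$, while every instance in the support of $\Dno$ has $\st\ge(2-\eta/2)\,k$; a $(2-\eta)$-estimate therefore separates the two with a constant gap. Second, for any deterministic algorithm making $q\le c_\eta n^{7/5}/\log n$ queries, the transcript distributions $\trans(\Dyes)$ and $\trans(\Dno)$ satisfy $\dtv<1/3$. Feeding the equal mixture of $\Dyes$ and $\Dno$ to the algorithm then forces failure with probability greater than $1/3$. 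The randomized statement follows by the usual averaging argument.

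\textbf{Construction.} The basic gadget, in the spirit of \cite{CKT23}, is a hidden Steiner point. In $\Dyes$ we plant about $k/2$ disjoint triples: a nonterminal hub $s$ at distance $1$ from two terminals that are otherwise at distance $2$ from each other. Each such hub lets its pair be connected at cost about $1$ per terminal instead of $2$. In $\Dno$ the hubs are absent, or their neighbourhoods are arranged so that they never serve two terminals at once. Distances of $\delta$ glue points into tiny clusters, so that weights stay metric: every triangle with sides in $\{\delta,1,2\}$ satisfies the triangle inequality once the $\delta$-clusters are closed under it.

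To push the bound from $n^{6/5}$ to $n^{7/5}$, I would layer the hiding. The terminals are partitioned into groups of size $n^{a}$, and each nonterminal is attached to a random group through a random matching of size $n^{b}$. In $\Dno$ the attachments of a nonterminal to the terminals of the cheaper structure are scrambled, so that every local statistic matches the one in $\Dyes$ (degree counts of $1$-edges, cluster sizes). The exponents $a$ and $b$ are then tuned so that exposing a planted structure needs roughly $n^{2/5}$ hits per nonterminal across $n$ nonterminals.

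\textbf{Cost bounds.} For $\Dyes$ we exhibit an explicit tree. For $\Dno$ we lower-bound $\st$ by showing that every Steiner point has at most one useful terminal neighbour at distance $1$, up to negligible exceptions. Then any tree pays about $2$ per terminal, as in an MST argument via full components.

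\textbf{Indistinguishability (the main obstacle).} I would couple $\Dyes$ and $\Dno$ through a shared random relabeling and answer queries through a lazy oracle. Every query returns $2$ unless it hits a planted $1$- or $\delta$-edge; the transcripts differ only on the event that the algorithm "touches" a witness configuration. The core estimate is that, given a history of $t$ queries, the conditional probability that the next query completes a distinguishing pattern is $O(t^{3/2}/n^{2.1}\cdot\log n)$ or similar. Summing over $t\le q$ gives a total of $o(1)$ for $q=cn^{7/5}/\log n$. The delicate part is adaptivity: queries that return $1$ reveal partial structure and raise the posterior. I would handle this by bounding the number of "heavy" revealed components (a $\log n$ factor from a Chernoff bound on degrees) and union-bounding over the ways that $O(1)$ revealed edges can be extended into a witness. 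Getting the exponents to balance exactly at $7/5$ is where I expect the real work to lie.
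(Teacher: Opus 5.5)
Your proposal has a genuine gap already in the construction. A nonterminal hub at distance $1$ from exactly two terminals saves nothing: it joins two pieces at cost $1+1=2$, the same as a direct edge of weight $2$. With $k/2$ such hubs the best tree still costs $2\cdot(k/2)+2(k/2-1)=2k-2$, exactly the terminal MST, so your bound $\st\le(1+o(1))k$ for \yes{} instances is false. A center adjacent to $t$ terminal pieces saves only $t-2$ (this is the profit $r(X)-2|X|$ of \cref{lem:profit}). A ratio near $2$ therefore needs hubs adjacent to $d=\Theta(1/\eta)$ special terminals. The paper puts, in each group's row of the hidden block, one center at distance $1$ from all $d$ members, which gives ratio $2(d+1)/(d+4)$. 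Also, the $\delta$-clusters are not needed for metricity, since every $\{1,2\}$-valued distance is already a metric. Their role is to hide the $dA$ special terminals among $K\gg dA$ ordinary terminals that cost almost nothing to connect, so costs should not be normalized by the total number $k$ of terminals.

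The larger gap is that your indistinguishability part contains none of the ideas that produce $7/5$, and your own accounting does not close: $\sum_{t\le q}t^{3/2}/n^{2.1}\approx q^{5/2}/n^{2.1}$, which is about $n^{1.4}$ at $q=n^{7/5}$. In your \no{} world every hub serves at most one terminal, so one center at distance $1$ from two terminals distinguishes the worlds. That is the witness behind the $n^{6/5}$ bound of \cite{CKT23}, and matching degree counts does not remove it. The paper needs three things you do not have.
\begin{enumerate}[label=(\roman*)]
  \item The \no{} gadget itself contains shared-pair centers: the $\binom d2$ pairs plus $\binom{d-1}2$ empty sets, against one full group plus $d-2$ copies of each singleton. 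For any at most two members of a group, the two rows then agree up to a permutation of the hidden row order (\cref{lem:trade}). An injection built from these permutations (\cref{lem:coupling}) bounds $\dtv$ by the probability that three members of one group of the hidden block are probed against that group's own row.
  \item The special terminals are regrouped independently in each of $R$ blocks, and only one uniformly random block $g$, independent of the \no{} run, carries \yes{} gadgets. This yields the factor $1/R$.
  \item Adaptivity is handled by exchange arguments giving $\Ex[|W|]=O(dq/K)$ for the noticed terminals. This is combined with a sparsity bound holding simultaneously for every set $Q$ of at most $R$ special terminals, $Z(Q)\le21|Q|\log A$ (\cref{lem:sparsity}). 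That bound, not a Chernoff bound on degrees, is the source of the $\log n$.
\end{enumerate}
With $A=m^3$, $R=m^2$, $K=AR$, these give $\Prob_{\Dno}[g\text{ exposed}]=O_d(q\log A/(KR)+q/(AK))+A^{-10}$ with $KR=K^{7/5}$. Because the ``layered'' construction and the exponents $a,b$ are left unspecified, your sketch gives no instance on which the $7/5$ bound could be checked.
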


\begin{table}[t]
\centering
\small
\begin{tabular}{@{}llll@{}}
\toprule
Problem & Lower bound & Upper bound & Reference\\
\midrule
Estimation within $2+\eps$ & $\widetilde\Omega(k)$ & $\widetilde O(k)$ & \cite{CS09}\\
Estimation within $2-\eps$, $\eps<\tfrac13$ & $\widetilde\Omega(n+k^{6/5})$ & $\widetilde O(n^{5/3})$
  & \cite{CKT23,MRTV25}\\
\quad\emph{\textbf{this paper}}, $k=\Theta_\eps(n)$
  & $\boldsymbol{\Omega_\eps(n^{7/5}/\log n)}$ & & \cref{thm:main}\\[2pt]

Estimation within $\tfrac53-\eps$ & $\Omega_\eps(n^2)$ & $\Theta(n^2)$ & \cite{CKT23}\\
\midrule
Return a tree of weight $\le(2-\eps)\st$, $\eps<\tfrac13$ & $\Omega(nk)$ & $O(nk)$ & \cite{CKT23}\\
Return a tree of weight $\le\alpha\,\st$, constant $\alpha\ge2$ & $\Omega(k^2/\alpha)$
  & $\widetilde O(k^2/\alpha)$ & \cite{CKT23}\\
\bottomrule
\end{tabular}
\caption{Known bounds for metric Steiner tree in the distance-oracle model, and the line that
\cref{thm:main} improves. The bound $\widetilde O(n^{5/3})$ holds for one particular constant
$\eps_0>0$; \cref{thm:main} holds for every fixed $\eta$, in particular for that $\eps_0$.}\label{tab:prior}
\end{table}

We remark that following \cite{CKT23,MRTV25} we require the algorithm to output a value that is at least the optimum and at
most $(2-\eta)$ times the optimum.  The proof uses only that the interval of acceptable answers has
its two endpoints in the ratio $2-\eta$.  

Moreover, although \cref{thm:main} is stated for worst-case query complexity, the same lower bound holds, up to a constant factor, for expected query complexity.  Indeed, after constant-probability amplification, an algorithm using $q$ queries in expectation can be truncated after a sufficiently large constant multiple of $q$ queries, and Markov's inequality decreases its success probability by only a constant. The same argument then applies to the truncated algorithm.

\subsection{Overview of the proof}\label{sec:overview}

\paragraph{The two distributions.}
Fix a constant integer $d=\Theta(1/\eta)$ and a growing integer $m$, and set $A=m^3$, $R=m^2$, and $K=AR=m^5$. 

The point set is partitioned into {\em terminals} and {\em non-terminals}.
There are $K$ \emph{ordinary terminals}, split into $A$ hidden \emph{clusters} of $R$ each. We set the distance of any two
ordinary terminals of one cluster to be $\lambda=K^{-3}$, so a cluster is a tight group
that costs almost nothing to connect up.  There are $dA$ \emph{special terminals}, forming a hidden
uniformly random subset of the terminal labels, so an algorithm does not know which terminals are
ordinary and which are special.  Finally there are $R$ public \emph{blocks} of non-terminal points,
which we call \emph{centers}.

Independently in each block, the special terminals are split into $A$ hidden \emph{groups} of $d$
each,
and each group which we denote by $U$ is given $b=(d-1)^2$ of the block's centers, a publicly announced row of the block. What is hidden is the order of centers within that row.  A center is
at distance $1$ from a certain subset of its row's group of terminals $U$, and at distance $2$ from everything else, and further all distances not yet named are $2$.  The two distributions differ only in these subsets:

\begin{itemize}[leftmargin=1.6em,itemsep=2pt]
  \item In the \no{} distribution, the $b$ centers of a group $U$'s row contain the $\binom d2$ two-element
        subsets of $U$, together with $\tfrac{(d-1)(d-2)}2$ empty subsets.  Every center is then at
        distance $1$ from at most two terminals, so no center is worth using.
  \item In the \yes{} distribution one hidden block $g$ is different.  There the subsets contain $U$ itself once, and $d-2$ copies of $\{u\}$ for each $u\in U$.  Note that in this case, one center per group is now at distance $1$ from all $d$ members of the group, and joining a group through that center costs $d$ instead of the $2(d-1)$ that $d$ separate points would cost.
\end{itemize}

Both families use exactly $b=(d-1)^2$ centers, so the two distributions live on point sets of the
same size and the accounting matches.

\paragraph{The gap in cost.}
In a \no{} instance no center helps, so the optimum contracts each cluster at essentially no cost
and then pays $2$ for each of the $(d+1)A$ remaining pieces, for a total of about $2(d+1)A$.  In a
\yes{} instance each of the $A$ groups can additionally be gathered into a star for $d$ units,
leaving only $2A$ pieces, for a total of about $(d+4)A$.  The ratio tends to
\[
  \frac{2(d+1)}{d+4}=2-\frac{6}{d+4},
\]
which is above $2-\eta$ once $d>6/\eta-4$.  This is where the constant $d=\Theta(1/\eta)$ comes
from, and why the ratio approaches $2$ without reaching it.

\paragraph{Why the difference is hard to see.}
The two families of centers are built to be indistinguishable locally.  In both, each group's row has
exactly $b$ centers, each terminal of the group has exactly $d-1$ of them at distance $1$, and the
group accounts for exactly $d(d-1)$ terminal--center pairs at distance $1$.  \Cref{lem:trade} says
more: for any at most $\alpha = 2$  terminals of a group there is a one-to-one correspondence between the two
families that preserves what those  terminals see.  A difference appears only for $\alpha+1=3$
terminals of one group, since in the \yes{} family one center is at distance $1$ from all three of them
and in the \no{} family no center ever is.

So call a group \emph{exposed} if the algorithm has queried three of its members against centers
in this very group's row, and a block exposed if it contains an exposed group.  \Cref{lem:coupling}
turns the correspondence into a coupling argument and shows that the two distributions of query-answer
transcripts differ by at most the probability that the hidden block $g$ is exposed.

\paragraph{Counting exposed groups.}
Two things must be counted.  First, how many special terminals the algorithm can single out at all.
A special terminal is \emph{noticed} if either it ever gets the answer $1$, or if it is queried many times, or if the
algorithm has poked around its centers a great deal.  Two exchange arguments
(\cref{lem:cluster,lem:special}) show that these events are rare: conditioned on everything seen so
far, a fresh query returns $1$ at a given terminal with probability $O_d(1/K)$, and a given terminal
is special with probability $O_d(1/R)$, which are roughly equal to their unconditional probabilities.  Feeding this back into the
run gives $\Ex[|W|]=O_d(q/K)$ for the set $W$ of noticed terminals (\cref{lem:W}).

Second, how often a set that small contains three members of a common group.  Since $W$ is chosen by
the algorithm using everything it has seen, we need a bound valid for \emph{all} sets at once.
\Cref{lem:sparsity} supplies one: with probability $1-A^{-10}$, every set $Q$ of at most $R$ special
terminals meets three-in-a-group in at most $O(|Q|\log A)$ of the $RA$ block--group pairs.  Together
with a bound on lucky queries that land on an unnoticed terminal at one of its own group's centers,
this gives
\[
  \Prob[\,g\text{ exposed}\,]=O_d\!\left(\frac{q\log A}{KR}+\frac q{AK}\right)+A^{-10}.
\]
With $A=m^3$, $R=m^2$ and $K=m^5$ the leading term is $q\log A/m^7$, and $m^7=K^{7/5}$.  Since the
whole point set has $\Theta_\eta(K)$ points, this is the exponent $7/5$ of \cref{thm:main}.

\paragraph{Generalization.}
One natural question is whether this construction can be generalized to $\alpha > 2$ to obtain stronger lower bounds. In particular, {\em can we construct \yes{} and \no{} families for which distinguishing the two requires more than three local queries?} 
In \Cref{sec:tradeoff}, we sketch how the construction can be generalized to larger values of $\alpha$. However, while this modification increases the number of queries required to distinguish the two families, it comes at the cost of reducing the gap between their optimal Steiner tree costs. Our discussion outlines an argument that such a generalization leads to a query lower bound of $\widetilde{\Omega}(n^{(3\alpha+1)/(2\alpha+1)})$ for achieving an $\tfrac{\alpha}{\alpha-1}$-estimation.

\paragraph{Comparison with the earlier construction.}
We build on the framework of \cite[Section~7]{CKT23}: tight clusters of ordinary terminals, rare
special terminals hidden among them, and public blocks of nonterminal centers. Their special groups
are fixed across blocks. Each \no{} block uses a matching to connect special terminals to centers by unit-weight edges. The \yes{} instance replaces each group's matching by a star in a separately chosen hidden block.
The distinguishing witness is a center at distance $1$ from two special terminals.

Our improved exponent comes from two qualitatively new sources of hardness. First, our \no{} gadgets
themselves contain centers shared by pairs of special terminals. Each gadget occupies a public row
of centers, and we balance the two cases so that the complete view of any two group members has the
same distribution under the hidden center ordering. Thus the earlier shared-center witness is
inconclusive, and a third member must be probed against the row (\cref{lem:trade,lem:coupling}).
Second, we regroup the special terminals independently in each block, in contrast to the single
global grouping in \cite{CKT23}. Only one uniformly random hidden block carries the \yes{} gadgets.
Learning one block's grouping does not identify the grouping in another.  Our new uniform sparsity
bound (\cref{lem:sparsity}) ensures that few queries expose only a small fraction of the blocks, making it unlikely that any exposed block is the hidden one. Together with the indistinguishability of two-terminal views, this raises the query exponent from $6/5$ to $7/5$.

\subsection{Organization} \Cref{sec:prelim} fixes notation and records two standard facts.  \Cref{sec:constr} defines the two
distributions and \cref{sec:gap} computes their optimal costs exactly.  \Cref{sec:trade} proves the
local indistinguishability of the two families of centers and \cref{sec:coupling} turns it into a
coupling.  \Cref{sec:counting,sec:sparsity} bound the probability that the hidden block is exposed,
and \cref{sec:proof} assembles \cref{thm:main}.  \Cref{sec:tradeoff} generalizes the
construction and proves the trade-off between the exponent and the approximation
factor.

\section{Preliminaries}\label{sec:prelim}

\begin{definition}[Metric Steiner tree]\label{def:st}
The input is a triple $(V,T,w)$ in which $w$ is a metric on the finite set $V$ and $T\subseteq V$ is the set of terminals.
A feasible solution is a tree in the complete graph on $V$ whose vertex set contains $T$, and its
weight is the sum of the weights of its edges.  We write $\st(V,T,w)$ for the smallest weight of a
feasible solution.
\end{definition}

Requiring a tree rather than any connected subgraph loses nothing, since a connected subgraph
containing $T$ contains a tree containing $T$ of no larger weight.

An algorithm learns $w$ only by querying unordered pairs $\{u,v\}\subseteq V$ and receiving $w(u,v)$.
The set $V$, the set $T$, and the distribution the instance is drawn from are all known to it in
advance. Its query cost is the largest number of queries it makes, over all inputs and all random
choices. 

\paragraph{Transcripts.}For a deterministic algorithm $\mathcal A$ and an instance $w$, write $\trans(\mathcal A,w)$ for the
\emph{query--history transcript} of the run: the sequence
$\langle(\{u_1,v_1\},a_1),\dots,(\{u_q,v_q\},a_q)\rangle$ of queried pairs and their answers, where
$a_t=w(u_t,v_t)$. We extend the notation to distributions: if $\mathcal D$ is a distribution over instances, $\trans(\mathcal A,\mathcal D)$ denotes the distribution over the transcripts when the instance is drawn from $\mathcal D$.
We use two facts about the transcripts where both use only that $\mathcal A$ is deterministic:

\begin{enumerate}[label=\textup{(R\arabic*)},leftmargin=2.8em,itemsep=1pt]
  \item\label{it:R1} the $t$-th pair $\{u_t,v_t\}$ is determined by the first $t-1$ pairs and
        answers; and
  \item\label{it:R2} the transcript determines the output of $\mathcal A$.
\end{enumerate}

To show that $\mathcal A$ cannot tell two distributions $\mathcal D_1, \mathcal D_2$ apart, it suffices to make their transcripts agree with high probability, i.e., $\trans(\mathcal A,\mathcal D_1) = \trans(\mathcal A,\mathcal D_2)$ with high probability.

\medskip
For probability distributions $P,Q$ on a common finite set $X$ we write
$\dtv(P,Q)=\tfrac12\sum_{x\in X}|P(x)-Q(x)|$, and we use the identity
$\dtv(P,Q)=\max_{E\subseteq X}\bigl(P(E)-Q(E)\bigr)$, the maximum being over all subsets $E$ of the ground set.
We also use two standard facts.

\begin{proposition}[see, e.g., \cite{LeCam73}]\label{prop:lecam}
Let $P,Q$ be distributions on a finite set $\Omega$, let a fair coin decide which of the two the
observation is drawn from, and let a possibly randomized rule guess the answer from the observation.
The rule is correct with probability at most $\bigl(1+\dtv(P,Q)\bigr)/2$.
\end{proposition}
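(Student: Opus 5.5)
The statement just given is Le Cam's two-point bound, so I would prove it directly from the subset characterization $\dtv(P,Q)=\max_{E\subseteq\Omega}\bigl(P(E)-Q(E)\bigr)$ recorded above. The plan is to handle deterministic rules first and then extend to randomized rules by averaging over the internal randomness.

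\emph{Deterministic rules.} A deterministic rule is a function $f:\Omega\to\{P,Q\}$. Let $E=f^{-1}(P)$ be the set of observations on which it guesses $P$. With the fair coin, the success probability is
\[
\tfrac12 P(E)+\tfrac12\bigl(1-Q(E)\bigr)=\tfrac12\bigl(1+P(E)-Q(E)\bigr).
\]
By the subset identity, $P(E)-Q(E)\le\dtv(P,Q)$, which gives the bound $(1+\dtv(P,Q))/2$.

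\emph{Randomized rules.} A randomized rule is specified by a function $p:\Omega\to[0,1]$, where $p(x)$ is the probability of guessing $P$ on observation $x$. Its success probability is
\[
\tfrac12\Bigl(1+\sum_x p(x)\bigl(P(x)-Q(x)\bigr)\Bigr).
\]
This expression is linear in $p$. It is therefore maximized at an extreme point of $[0,1]^\Omega$, namely $p=\mathbf 1_E$ for $E=\{x: P(x)>Q(x)\}$. That choice is a deterministic rule, so the previous case applies.

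Equivalently, one can view a randomized rule as a mixture of deterministic rules, each of which satisfies the bound, so the mixture does too. There is no real obstacle here. The only point needing care is that the identity $\dtv=\max_E(P(E)-Q(E))$ is taken as given from the preliminaries. If it were not, it follows by noting that the maximizing set is $E=\{P>Q\}$ and that $\sum_x(P(x)-Q(x))=0$, so $P(E)-Q(E)=\tfrac12\sum_x|P(x)-Q(x)|$.
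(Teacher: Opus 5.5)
Your proof is correct. Writing $p(x)$ for the probability of guessing $P$ on observation $x$, the success probability is $\tfrac12\bigl(1+\sum_x p(x)(P(x)-Q(x))\bigr)\le\tfrac12\bigl(1+\dtv(P,Q)\bigr)$, with equality at $p=\mathbf 1_{\{P>Q\}}$; this is the standard argument. The paper gives no proof of its own and only cites the fact. It applies the proposition only to deterministic rules (the output of a deterministic algorithm is a function of its transcript), so your deterministic case alone already covers every use the paper makes of it.
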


\begin{proposition}[Yao's min-max principle~\cite{Yao77}]\label{prop:yao}
Let $\mu$ be a distribution over inputs.  If every deterministic algorithm making at most $q$
queries is correct with probability less than $2/3$ on an input drawn from $\mu$, then every
randomized algorithm making at most $q$ queries is wrong with probability more than $1/3$ on at
least one input in the support of $\mu$.
\end{proposition}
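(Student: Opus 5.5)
The statement immediately above is \cref{prop:yao}, Yao's min-max principle in the form needed for \cref{thm:main}. I would prove it by an averaging argument: view a randomized algorithm as a distribution over deterministic ones and exchange the order of the two expectations.

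First I formalize the randomized algorithm. Let $\mathcal R$ be a randomized algorithm that makes at most $q$ queries on every input and for every random string. Its random choices can be drawn all at once, before any query is asked, as a string $r$ from some distribution $\rho$. The instance space here is finite, and the number of queries is bounded by $q$, so only finitely many random bits are ever consulted and $\rho$ may be taken on a finite set. Fixing $r$ turns $\mathcal R$ into a deterministic algorithm $\mathcal A_r$, which still makes at most $q$ queries. For an input $x$ let $c(r,x)\in\{0,1\}$ indicate that $\mathcal A_r$ is correct on $x$.

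Next comes the averaging step. By hypothesis, $\Ex_{x\sim\mu}[c(r,x)]<2/3$ for every fixed $r$. Averaging over $r\sim\rho$ and applying Fubini on the finite product space gives
\[
  \Ex_{x\sim\mu}\Bigl[\Ex_{r\sim\rho}[c(r,x)]\Bigr]=\Ex_{r\sim\rho}\Bigl[\Ex_{x\sim\mu}[c(r,x)]\Bigr]<2/3 .
\]
The inner quantity $\Ex_{r\sim\rho}[c(r,x)]$ is exactly the success probability of $\mathcal R$ on input $x$. An average over $\mu$ that is below $2/3$ forces some $x$ with $\mu(x)>0$ to satisfy $\Prob_r[\mathcal R\text{ correct on }x]<2/3$. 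On that input $\mathcal R$ is wrong with probability more than $1/3$, which is the claim.

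The only step that needs real care is the first, namely modelling the randomized algorithm as a mixture of deterministic ones, each obeying the same query bound. The requirement that the query cost is worst-case over all random choices, as fixed in \cref{sec:prelim}, is what guarantees that every $\mathcal A_r$ makes at most $q$ queries. If the algorithm's randomness were adaptive or potentially unbounded, I would first cut it off at the finite depth reachable within $q$ queries, which leaves all success probabilities unchanged. Once that is in place, the remaining steps are immediate.
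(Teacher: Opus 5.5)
Your averaging argument is correct, and it is the standard proof of this direction of Yao's principle; the paper does not prove \cref{prop:yao} but only cites it, so there is no different route to compare against. One small remark: you do not need the space of random strings to be finite, and its finiteness does not follow from the query bound anyway, since an algorithm may consume unboundedly many random bits between queries. Because $\mu$ has finite support, exchanging the sum over inputs with the expectation over $r$ is just linearity of expectation on an arbitrary probability space, and a function that is $<2/3$ at every $r$ still has expectation $<2/3$.
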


\section{The two distributions}\label{sec:constr}

Throughout the rest of the paper we fix $\eta\in(0,1)$ and the constant integers
\begin{equation}\label{eq:d}
  d:=\Bigl\lceil\frac6\eta\Bigr\rceil-3,\qquad b:=(d-1)^2,
\end{equation}
so that $d>6/\eta-4$ and $d\ge4$.  We also fix the absolute constant $\theta:=\tfrac18$.  Let
$m\ge2$ be a growing integer and set 
\begin{equation}\label{eq:params}
  A:=m^3,\qquad R:=m^2,\qquad K:=AR=m^5,\qquad \lambda:=K^{-3},\qquad D:=\theta A .
\end{equation}

\paragraph{The points.}
The point set $V$ is fixed in advance and is known to the algorithm.  It consists of
\begin{itemize}[leftmargin=1.6em,itemsep=1pt]
  \item a set $T$ of $RA +dA = K+dA$ \emph{terminals}, and
  \item $R$ disjoint \emph{blocks} $C_1,\dots,C_R$ of non-terminals, called \emph{centers}, with
        $|C_i|=bA$ for each $i$. Each block is divided once and for all into $A$ labeled \emph{rows} $C_{i,1},\dots,C_{i,A}$ of $b$ centers each.
\end{itemize}
Thus
\begin{equation}\label{eq:ncount}
  n=(K+dA)+RbA=(b+1)K+dA=\Theta_\eta(K),
  \qquad |T|=K+dA=\Theta_\eta(n).
\end{equation}
Note that it is public which points are terminals, and that which block each center belongs to, and which row of that block it
lies in.
\paragraph{The hidden data.}
A \emph{configuration} $\omega$ consists of the following five objects, drawn independently and each
uniformly at random from its range.
\begin{enumerate}[label=\textup{(C\arabic*)},leftmargin=2.8em,itemsep=2pt]
  \item\label{it:C1} A set $S\subseteq T$ with $|S|=dA$.  Its members are the \emph{special}
        terminals. The other $K$ terminals are \emph{ordinary}.
  \item\label{it:C2} A partition of the ordinary terminals into labeled \emph{clusters}
        $G_1,\dots,G_A$, each of size $R$.
  \item\label{it:C3} For each block $i\in[R]$, a partition of $S$ into labeled \emph{groups}
        $S_{i,1},\dots,S_{i,A}$, each of size $d$.  The $R$ partitions are independent of one
        another. We use $U$ to refer to a generic group
        of $d$ special terminals that could replace any of $S_{i,j}$.
  \item\label{it:C4} For each block $i\in[R]$ and each row $j\in[A]$, a uniformly random ordering
        \[
          C_{i,j}(1),\ C_{i,j}(2),\ \dots,\ C_{i,j}(b)
        \]
        of the $b$ centers of that row. Note that the $RA$ orderings are independent of one another.
  \item\label{it:C5} An index $g\in[R]$, the \emph{hidden block index}.
\end{enumerate}
All configurations are equally likely.  Both distributions below draw $\omega$ this way. However the index
$g$ has no effect at all on the \no{} metric, which is exactly what makes it useful in
\cref{sec:coupling}.

\paragraph{Neighbor sets.}
Fix a block $i$ and a row $j$, and let $u_1,\cdots,u_d$ be any fixed ordering of the terminals in $S_{i,j}$. Now define the {\em neighbor set} of a center $C_{i,j}(r)$ denoted by $\nb_{i,j}(r)$ in each of \yes{} and \no{} instances as follows. 
For the \no{} instance set it as
\begin{equation}\label{eq:nbn}
  \mathsf \nb_{i,j}(r):=
  \begin{cases}
    \{u_a,u_c\}, & 1\le r\le\tbinom d2,\ \text{where $\{a,c\}$ is the $r$-th pair in a fixed order,}\\[2pt]
    \varnothing, & \tbinom d2<r\le b,
  \end{cases}
\end{equation}
and for the \yes{} instance set it as
\begin{equation}\label{eq:nby}
  \mathsf \nb_{i,j}(r):=
  \begin{cases}
    S_{i,j}, & r=1,\\[2pt]
    \{u_a\}, & r=1+(a-1)(d-2)+s\ \text{ with }a\in[d],\ s\in[d-2].
  \end{cases}
\end{equation}
Note that both lists have exactly $b$ entries.
We use $\mathsf Y$ and $\mathsf N$ to denote blocks in the \yes{} and \no{} instances, respectively, as defined above. Two consequences will be used repeatedly: in a \no{} block $\mathsf N$, every center has at most two neighbors, whereas in a \yes{} block $\mathsf Y$, each row contains one center adjacent to its entire group and $d(d-2)$ centers with one neighbor each. \Cref{fig:gadget} shows the two lists for $d=4$.

\begin{figure}[t]
\centering
\begin{tikzpicture}[x=1cm,y=1cm,
  term/.style={rectangle,draw=black!80,fill=black!72,minimum size=5.4pt,inner sep=0pt,rounded corners=0.5pt},
  ctr/.style ={circle,draw=black!60,fill=black!12,minimum size=5.4pt,inner sep=0pt},
  idle/.style={circle,draw=black!45,densely dashed,fill=white,minimum size=5.4pt,inner sep=0pt},
  hub/.style ={circle,draw=accent!60!black,fill=accent,minimum size=7pt,inner sep=0pt},
  lin/.style ={draw=black!42,line width=0.45pt},
  hlin/.style={draw=accent,line width=0.85pt},
  lbl/.style ={font=\scriptsize},
  slbl/.style={font=\tiny,text=black!60},
  panel/.style={font=\small\bfseries}]

\foreach \dx/\tag in {0/a, 7.6/b}{
  \begin{scope}[shift={(\dx,0)}]
    \draw[draw=black!16,rounded corners=8pt,fill=black!2]
      (-0.35,-0.75) rectangle (6.75,2.85);
    \foreach \k/\x in {1/1.2, 2/2.6, 3/4.0, 4/5.4}{
      \node[term] (t\k\tag) at (\x,2.15) {};
      \node[lbl] at (\x,2.52) {$u_\k$};
    }
    \foreach \k/\x in {0/0.4, 1/1.1, 2/1.8, 3/2.5, 4/3.2, 5/3.9, 6/4.6, 7/5.3, 8/6.0}{
      \coordinate (p\k\tag) at (\x,0.35);
    }
  \end{scope}
}

\foreach \k/\i/\j in {0/1/2, 1/1/3, 2/1/4, 3/2/3, 4/2/4, 5/3/4}{
  \draw[lin] (t\i a) -- (p\k a);
  \draw[lin] (t\j a) -- (p\k a);
}
\foreach \k in {0,...,5}{\node[ctr] at (p\k a) {};}
\foreach \k in {6,7,8}{\node[idle] at (p\k a) {};}
\node[panel] at (0.25,2.6) {(a)};
\node[slbl] at (3.2,-0.42) {six centers of degree two, three of degree zero};

\begin{scope}
  \coordinate (hubb) at (7.5, 0.40);
  \foreach \k/\i in {0/1, 1/1, 2/2, 3/2, 4/3, 5/3, 6/4, 7/4}{
    \draw[lin] (t\i b) -- (p\k b);
  }
  \foreach \i in {1,2,3,4}{\draw[hlin] (t\i b) -- (hubb);}
  \foreach \k in {0,...,7}{\node[ctr] at (p\k b) {};}
  \node[hub] at (hubb) {};
\end{scope}
\node[panel] at (7.85,2.6) {(b)};
\node[slbl] at (10.8,-0.42) {eight centers of degree one, one of degree four};

\node[font=\scriptsize,text=black!70] at (7.2,-1.15)
  {a line means distance $1$; every pair not joined by a line is at distance $2$};
\end{tikzpicture}
\caption{The $b$ centers of one group's row, for a group $U=\{u_1,u_2,u_3,u_4\}$ of $d=4$ special terminals, under
$\mathsf N$ in (a) and $\mathsf Y$ in (b).  In both pictures the group's row has
$b=(d-1)^2=9$ centers, each terminal has exactly $d-1=3$ centers at distance $1$, and there are
exactly $d(d-1)=12$ terminal--center pairs at distance $1$.  By \cref{lem:trade} the agreement goes
further: from the viewpoint of any one or two terminals of the group, the two pictures are the
same.  Only three terminals of one group can tell them apart, because in (b) one center is at
distance $1$ from three of them, which never happens in (a).}\label{fig:gadget}
\end{figure}

\paragraph{The two metrics.}
Given a configuration $\omega$, the \no{} metric $w_{\mathsf n}$ uses $\mathsf N$ in every
block, and the \yes{} metric $w_{\mathsf y}$ uses $\mathsf Y$ in block $g$ and $\mathsf N$ in every other block. In both cases, for distinct points $x\ne y$,
\begin{equation}\label{eq:w}
  w(x,y)=
  \begin{cases}
    \lambda, & x,y\text{ are ordinary terminals of a common cluster},\\
    1, & x\in C_{i,j} \text{ and } y\in\nb_{i,j}(x)\text{, or the same with $x,y$ swapped},\\
    2, & \text{otherwise},
  \end{cases}
\end{equation}
and $w(x,x)=0$.  

\paragraph{Distributions.} Let $\Dno$ and $\Dyes$ be the resulting distributions on instances, for $\omega$
chosen uniformly at random.  Finally the input distribution $\mu$ of our algorithm will be $\Dyes$ with probability $1/2$ and $\Dno$ with probability $1/2$, i.e., $\mu := \tfrac12\Dno+\tfrac12\Dyes$.

\begin{lemma}\label{lem:metric}
Every instance in the support of $\Dno$ or of $\Dyes$ is a metric.
\end{lemma}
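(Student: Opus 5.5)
Since $w$ is symmetric by \eqref{eq:w}, vanishes exactly on the diagonal, and takes only positive values on distinct pairs, the only axiom that needs checking is the triangle inequality $w(x,z)\le w(x,y)+w(y,z)$ for distinct $x,y,z$. We use that every distance between distinct points lies in $\{\lambda,1,2\}$, with $\lambda=K^{-3}<1$. The plan is to split on the value of $w(x,z)$ and rule out every configuration in which the inequality could fail.

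If $w(x,z)\le 1$, the inequality holds unless $w(x,y)+w(y,z)<w(x,z)$. That can only happen if both summands equal $\lambda$, because any summand of size at least $1$ already makes the right-hand side at least $1\ge w(x,z)$. Two $\lambda$-distances force $x,y$ to be ordinary terminals of a common cluster, and likewise $y,z$. Hence $x,z$ are ordinary terminals of the same cluster, so $w(x,z)=\lambda\le 2\lambda$.

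If $w(x,z)=2$, we need $w(x,y)+w(y,z)\ge 2$. This fails only when the sum is $2\lambda$, $1+\lambda$, or $\lambda+1$; a sum $1+1$ or anything involving a $2$ is fine. The case $2\lambda$ was handled above, since it forces $w(x,z)=\lambda$, a contradiction. So suppose, by symmetry, $w(x,y)=\lambda$ and $w(y,z)=1$. Then $y$ is an ordinary terminal, since $\lambda$-distances occur only between ordinary terminals. But a distance $1$ occurs only between a center in $C_{i,j}$ and a member of $\nb_{i,j}(\cdot)\subseteq S_{i,j}\subseteq S$, that is, a special terminal. Under $\mathsf N$ the neighbor sets are pairs of $S_{i,j}$ or empty, and under $\mathsf Y$ they are $S_{i,j}$ or singletons of it, so in both cases they lie inside $S$. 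Since $y$ is ordinary it is not special, and as a terminal it is not a center, so $w(y,z)=1$ is impossible.

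The only mild point to be careful about is exactly this disjointness: the neighbor sets must contain only special terminals, and the special and ordinary terminals must be disjoint by \ref{it:C1}. Both facts hold in every block, whether it is a \yes{} or a \no{} block, so the argument covers every instance in the support of $\Dno$ and of $\Dyes$.
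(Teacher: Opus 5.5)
Your proof is correct and follows essentially the same route as the paper: a violation needs the two right-hand distances to be $\lambda,\lambda$ (forced into one cluster) or $\lambda,1$ (impossible, since $y$ would be both ordinary and special or a center). The only difference is that you first split on the value of $w(x,z)$, whereas the paper notes directly that all distances are at most $2$, so a violation needs $w(x,y)+w(y,z)<2$.
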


\begin{proof}
The function $w$ is symmetric, vanishes exactly on the diagonal, and is positive elsewhere, so only
the triangle inequality needs checking.  Since every distance is at most $2$, a violation
$w(x,z)>w(x,y)+w(y,z)$ would need $w(x,y)+w(y,z)<2$, so the two distances on the right are either
$\lambda$ and $\lambda$, or $\lambda$ and $1$ in some order.

If $w(x,y)=w(y,z)=\lambda$, then $x,y$ lie in a common cluster and so do $y,z$. Further the clusters are
disjoint, so $x$ and $z$ lie in a common cluster too and $w(x,z)=\lambda\le2\lambda$.

If one of the two distances is $\lambda$ and the other is $1$, say $w(x,y)=\lambda$ and
$w(y,z)=1$, then $y$ is an ordinary terminal by the first line of \eqref{eq:w}, while the second
line of \eqref{eq:w} forces $y$ to be either a center or a special terminal.  This is impossible,
so no such pair of distances occurs.
\end{proof}

\subsection{Comparison of the Optimal Costs of the Two Instances}\label{sec:gap}

We compute $\st(V,T,w)$ exactly in both worlds.  Note first that in either \yes{} or \no{} instance, the optimal solution may be
assumed to contain, inside each cluster, a spanning tree of $\lambda$-edges: all $R$ members of a
cluster are at distance $2$ from every point outside the cluster, so any solution may be rerouted so
that only one member of the cluster meets the rest of the tree, and the other $R-1$ members are connected to it at cost $\lambda$ each.  This costs $A(R-1)\lambda=\lambda(K-A)$ in total and leaves $\kappa:=A+dA=(d+1)A$ \emph{terminal pieces}: the $A$ contracted clusters and the $dA$ special terminals.

For a set $X$ of centers (or nonterminals), let $H_X$ be the bipartite graph whose vertices are the $\kappa$ terminal
pieces together with the centers in $X$, with an edge whenever the distance is $1$.  Only special
terminals have such edges, since ordinary terminals are at distance $2$ from every center.  Let
$r(X)$ denote the largest number of edges in a cycle-free subgraph of $H_X$; equivalently,
$r(X)=(\kappa+|X|)-(\text{number of connected components of }H_X)$.

\begin{lemma}
\label{lem:profit}
In either instance, $\st(V,T,w)=\lambda(K-A)+2(\kappa-1)-\max_X\bigl(r(X)-2|X|\bigr)$,
the maximum being over all sets $X$ of centers.  Moreover
\[
  \max_X\bigl(r(X)-2|X|\bigr)=
  \begin{cases}
    0,& \text{in \no{} instances},\\
    (d-2)A,& \text{in \yes{} instances}.
  \end{cases}
\]
\end{lemma}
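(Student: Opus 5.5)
The proof splits into two parts: the formula for $\st$ in terms of $\max_X(r(X)-2|X|)$, and the evaluation of that maximum in each world.

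\textbf{The formula.} After the cluster contraction justified above, there are $\kappa$ terminal pieces, and every remaining edge costs $1$ or $2$. Fix an optimal tree and let $X$ be the set of centers it uses. We may assume every center in the tree has degree at least $2$, since a leaf center can be deleted without increasing cost; such a center is then worth keeping only if it is incident to several unit edges. The tree spans $\kappa+|X|$ vertices after contraction, so it has $\kappa+|X|-1$ edges. Its unit edges form a forest in $H_X$, so there are at most $r(X)$ of them, and every other edge costs $2$. Hence
\[
\text{cost}\ \ge\ \lambda(K-A)+2(\kappa+|X|-1)-r(X)\ =\ \lambda(K-A)+2(\kappa-1)-\bigl(r(X)-2|X|\bigr).
\]
Conversely, given any $X$, take a maximum spanning forest of $H_X$, which has $r(X)$ unit edges, and join its components by $(\#\text{components}-1)$ edges of weight $2$. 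All edges not of weight $\lambda$ or $1$ have weight exactly $2$, and every pair of points is at distance at most $2$, so this completes a spanning tree. Since $\#\text{components}=\kappa+|X|-r(X)$, the total matches the bound above. Taking the best $X$ gives the formula.

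\textbf{The \no{} value.} Each center in $\mathsf N$ has at most two neighbours, and they are special terminals. A center of degree $\le 2$ raises $r$ by at most $2$ when added, since adding a vertex of degree $\delta$ raises forest rank by at most $\delta$. So $r(X)\le 2|X|$, and $X=\varnothing$ gives $0$.

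\textbf{The \yes{} value.} Only block $g$ differs from $\mathsf N$. Its hub centers have degree $d$, and every other center of block $g$ has degree $1$. Adding a degree-$1$ center raises $r$ by $1<2$, so it never helps. The argument is incremental: start from an arbitrary $X$ and add centers one at a time. A center of degree $\le 2$ contributes a marginal gain in $r(X)-2|X|$ of at most $0$. The $A$ hubs of block $g$ each have degree $d$, so each contributes a marginal gain of at most $d-2$. This gives the upper bound $(d-2)A$. For the lower bound, take $X$ to be the $A$ hubs of block $g$. Their neighbourhoods $S_{g,j}$ partition $S$, so $H_X$ is a disjoint union of $A$ stars with $d$ leaves each and no cycles. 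Thus $r(X)=dA$ and $r(X)-2|X|=(d-2)A$.

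The step I expect to be the main obstacle is the lower bound in the formula, specifically the claim that unit edges of the optimal tree lie in $H_X$ with $X$ exactly the tree's centers. This needs the cluster contraction to be compatible, i.e., no unit edge can touch an ordinary terminal. That holds because ordinary terminals are at distance $2$ from every center, so everything else is routine.
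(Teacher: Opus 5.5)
Your proposal is correct and follows essentially the paper's argument. The upper bound joins a maximum forest of $H_X$ by weight-$2$ edges, the lower bound uses the fact that the unit edges of a tree form a forest in $H_X$, and both values follow from $r(X)\le\sum_{c\in X}\deg(c)$: you get this by adding centers one at a time, the paper by a per-component edge count. The only other difference is minor: the paper's lower bound handles an arbitrary feasible tree by bounding its number of $\lambda$-edges by $K-A$ and using $\lambda<2$, so it does not need to contract the clusters of an optimal tree first.
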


\begin{proof}
\emph{Upper bound.}  Fix $X$ and take a cycle-free subgraph of $H_X$ with $r(X)$ edges.  Added to
the $\lambda$-spanning trees of the clusters, these $r(X)$ edges of weight $1$ leave
$\kappa+|X|-r(X)$ connected pieces among the terminal pieces and the centers of $X$.  Any two
distinct pieces contain a pair of points at distance exactly $2$: such a pair cannot be at distance
$\lambda$, because each cluster lies inside a single piece, and it cannot be at distance $1$, because
an edge of $H_X$ between two distinct pieces could have been added without creating a cycle,
contradicting the choice of $r(X)$.  So the pieces can be connected by $\kappa+|X|-r(X)-1$ edges of
weight $2$. So the total weight of the resulting tree is
\[
  \lambda(K-A)+r(X)+2\bigl(\kappa+|X|-r(X)-1\bigr)
  =\lambda(K-A)+2(\kappa-1)-\bigl(r(X)-2|X|\bigr).
\]

\emph{Lower bound.}  Let $\Upsilon$ be any feasible tree and let $X$ be the set of centers it uses,
so that $\Upsilon$ has $K+dA+|X|$ vertices and $K+dA+|X|-1$ edges.  Split its edges by weight into
$F_\lambda$, $F_1$ and $F_2$.  Since the $\lambda$-edges of $\Upsilon$ form a cycle-free subgraph of
the disjoint union of the $A$ clusters, as we discussed earlier, $|F_\lambda|\le A(R-1)=K-A$.  Contracting the connected
pieces of $(V(\Upsilon),F_\lambda)$ turns $\Upsilon$ into a tree, so
$|F_1|+|F_2|=K+dA+|X|-1-|F_\lambda|$ and the weight of $\Upsilon$ equals
\begin{align*}
    \lambda|F_\lambda|+|F_1|+2|F_2| 
    &= (\lambda-2)|F_\lambda| + 2\bigl(K+dA+|X|-1\bigr)-|F_1| &&\rhd \lambda<2\\
    &\ge (\lambda-2) (K-A) + 2\bigl(K+dA+|X|-1\bigr)-|F_1| \\
    &= \lambda(K-A) + 2\bigl(A+dA+|X|-1\bigr)-|F_1| \\
    &= \lambda(K-A)+2(\kappa-1)+2|X|-|F_1|. &&\rhd \kappa = (d+1)A
\end{align*}
Finally, the edges of $F_1$ join special terminals to centers of $X$, so they are edges of $H_X$,
and they form no cycle because $\Upsilon$ is a tree and contracting the clusters does not affect
them.  Hence $|F_1|\le r(X)$ and the weight of $\Upsilon$ is at least
$\lambda(K-A)+2(\kappa-1)-\bigl(r(X)-2|X|\bigr)$.  This proves the identity.

\medskip
\emph{The saving in \no{} instances.}
Group the vertices of $H_X$ by connected component, and let $t$ and $x$ denote the
numbers of terminal pieces and centers in a component. Then
\begin{equation}\label{eq:decomp}
  r(X)-2|X|
  =\sum_{\text{components}}\bigl((t+x-1)-2x\bigr)
  =\sum_{\text{components}}(t-x-1).
\end{equation}
Consider one component. Since it is connected, it has $t+x-1$ edges.
On the other hand, every edge is incident to exactly one center, so its number
of edges is the sum of the degrees of its $x$ centers. In a \no{} block
every center has degree at most $2$, and hence the component has at most $2x$ edges. Therefore $t+x-1\le 2x$, or equivalently $t-x-1\le0$. Thus every term in \eqref{eq:decomp} is
nonpositive, so $r(X)-2|X|\le0$ for every $X$. Equality is attained for
$X=\varnothing$, when every component consists of a single terminal piece (i.e., the constructed tree uses no nonterminal vertex.).

\medskip
\emph{The saving in \yes{} instances.} Consider a component of $H_X$ containing $t$ terminal
pieces, $h$ centers of degree $d$ and $\ell$ centers of degree at most $2$. By the two counts
just made similar to the above we have $t+h+\ell-1\le dh+2\ell$, so $t-(h+\ell)-1=(t+h+\ell-1)-2(h+\ell)\le(d-2)h$. Centers of degree $d$ occur only in block $g$, one per group, so summing over components bounds
\eqref{eq:decomp} by $(d-2)A$.  This is attained by taking $X$ to be exactly those $A$ centers: the components of $H_X$ are then the $A$ stars, each with $t=d$ and $x=1$.
\end{proof}
By \cref{lem:profit}, the costs of the optimal trees in these two instances are the following values.
\begin{align}
  \Theta_{\mathsf n}&:=\lambda(K-A)+2\bigl((d+1)A-1\bigr),\label{eq:optno}\\
  \Theta_{\mathsf y}&:=\lambda(K-A)+2\bigl((d+1)A-1\bigr)-(d-2)A=\lambda(K-A)+(d+4)A-2,\label{eq:optyes}
\end{align}
the same for every configuration.  Since $\lambda(K-A)\le K^{-2}\to0$ and $A\to\infty$, the ratio
$\Theta_{\mathsf n}/\Theta_{\mathsf y}$ tends to $2(d+1)/(d+4)=2-6/(d+4)$, which by \eqref{eq:d} is
larger than $2-\eta$.  Hence there is an $m_\eta$ with
\begin{equation}\label{eq:gap}
  \Theta_{\mathsf n}>(2-\eta)\,\Theta_{\mathsf y}\qquad\text{for all }m\ge m_\eta.
\end{equation}

\subsection{One or two terminals see nothing}\label{sec:trade}
Fix a block $i$ and a row $j$, and let $U:=S_{i,j}$, and for any $1\leq r\leq b$, write $\nb^{\mathsf N}_{i,j}(r)$ and
$\nb^{\mathsf Y}_{i,j}(r)$ for the neighbor set of $C_{i,j}(r)$ given by \eqref{eq:nbn} and by \eqref{eq:nby}. For $B,H\subseteq U$ call $B\cap H$ the \emph{view of $B$ from $H$}.

\begin{lemma}
\label{lem:trade}
Let $H\subseteq U$ with $|H|\le2$.  There is a permutation $\phi_H$ of $[b]$ with
\[
  \nb^{\mathsf N}_{i,j}(r)\cap H=\nb^{\mathsf Y}_{i,j}\bigl(\phi_H(r)\bigr)\cap H
  \qquad\text{for every }r\in[b].
\]
\end{lemma}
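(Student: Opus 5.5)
The plan is to reduce the statement to a multiset comparison. The existence of a permutation $\phi_H$ with the stated property is equivalent to the two lists $\bigl(\nb^{\mathsf N}_{i,j}(r)\cap H\bigr)_{r\in[b]}$ and $\bigl(\nb^{\mathsf Y}_{i,j}(r)\cap H\bigr)_{r\in[b]}$ being equal as multisets of subsets of $H$. Any bijection that matches equal entries then serves as $\phi_H$. Both lists have exactly $b$ entries by \eqref{eq:nbn} and \eqref{eq:nby}, so it suffices to count, for each subset $B\subseteq H$, how many entries equal $B$ in each list. I will split into the cases $|H|=0,1,2$.

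\emph{Case $H=\varnothing$.} Every view is empty in both lists, so the identity permutation works.

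\emph{Case $H=\{u\}$.} In $\mathsf N$, the entries with view $\{u\}$ are the pairs containing $u$; there are $d-1$ of them. In $\mathsf Y$, they are the full set $S_{i,j}$ together with the $d-2$ singletons $\{u\}$, which again gives $d-1$. Since both lists have length $b$, the remaining $b-(d-1)$ entries have empty view in each list.

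\emph{Case $H=\{u,v\}$ with $u\neq v$.} I count each possible view.
\begin{itemize}
\item View $\{u,v\}$: in $\mathsf N$ only the pair $\{u,v\}$ itself gives this, so the count is $1$. In $\mathsf Y$ only $S_{i,j}$ gives it, since $d\ge 2$, so the count is also $1$.
\item View $\{u\}$: in $\mathsf N$ these are the pairs $\{u,c\}$ with $c\notin\{u,v\}$, giving $d-2$. In $\mathsf Y$ these are the $d-2$ singletons $\{u\}$.
\item View $\{v\}$: the same argument gives $d-2$ in both lists.
\item View $\varnothing$: the remaining $b-1-2(d-2)$ entries in each list.
\end{itemize}
All counts agree, so the multisets are equal and $\phi_H$ exists.

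There is no real obstacle here; the only care needed is the counting in the two-element case. That case is where the choice of $d-2$ singleton copies per terminal and the single full-group center in $\mathsf Y$ is exactly calibrated against the $\binom d2$ pairs in $\mathsf N$. It is also worth noting why the argument stops at $|H|\le 2$: for $|H|=3$, the view $H$ itself occurs once in $\mathsf Y$ (from $S_{i,j}$) but never in $\mathsf N$.
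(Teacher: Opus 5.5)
Your proof is correct and follows the paper's argument: reduce to equality of the multisets of views, then count each view class for $|H|=0,1,2$, getting the same counts $1$, $d-2$, $d-2$ in the two-element case. The only cosmetic difference is that the paper computes the empty-view count $(d-2)^2$ explicitly in both lists, whereas you obtain it by subtracting from the common length $b$, which is equally valid.
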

The lemma will be later used in the following form.  Let $H$ be the set of members of $S_{i,j}$ that the
algorithm has probed against centers of $C_{i,j}$, and suppose $|H|\le2$.  Then reordering that row
carries its \no{} neighbor sets onto its \yes{} ones without changing the answer to any pair already asked, and by \ref{it:C4} the ordering of a row is hidden and uniform, so the reordering costs nothing: on that row the two instances are indistinguishable.  Note that the terminals outside $S_{i,j}$ do not matter as they all lie at distance $2$ from every center of $C_{i,j}$ in both instances.  
\begin{proof}
It suffices to check that, for each possible view, the two blocks have the same number of centers
whose neighbor set has that view, and matching them within each view class defines $\phi_H$.

Let $H=\{u,v\}$ with $u\ne v$.  In $\mathsf N$ exactly one neighbor set is $\{u,v\}$; exactly
$d-2$ are pairs $\{u,x\}$ with $x\notin H$, and likewise $d-2$ are pairs $\{v,x\}$; and the
remaining ones, namely the $\binom{d-2}2$ pairs avoiding $H$ together with the $\tfrac{(d-1)(d-2)}2$ empty ones, have empty view. Their count is
\[
  \binom{d-2}2+\frac{(d-1)(d-2)}2=\frac{(d-2)\bigl((d-3)+(d-1)\bigr)}2=(d-2)^2 .
\]
In $\mathsf Y$ exactly one neighbor set is $U$, whose view is $\{u,v\}$; the $d-2$ equal to $\{u\}$ have view $\{u\}$, and likewise for $v$; and the $(d-2)^2$ equal to $\{x\}$ for $x\notin H$ have empty view. The two counts agree, and they sum to $\bigl(1+(d-2)\bigr)^2=(d-1)^2=b$ as they should.
\[
\begin{array}{c|cccc}
\text{view}&\{u,v\}&\{u\}&\{v\}&\varnothing\\\hline
\text{number of centers}&1&d-2&d-2&(d-2)^2
\end{array}
\]

Let $H=\{u\}$.  In $\mathsf N$ the neighbor sets with view $\{u\}$ are the $d-1$ pairs containing
$u$. In $\mathsf Y$ they are $U$ together with the $d-2$ copies of $\{u\}$, again $d-1$ in all.  The
remaining $b-(d-1)$ have empty view in either block.  For $H=\varnothing$ every view is empty and
both blocks have $b$ centers per row.
\end{proof}

The count in \cref{lem:trade} is tight: for $|H|=3$ the neighbor set $U$ has a view of size $3$ in
$\mathsf Y$, while in $\mathsf N$ every view has size at most $2$.  This is the only way the
two instances can be told apart locally, and the next section makes that precise.

\section{The Coupling}\label{sec:coupling}
Given a configuration, its \yes{} and \no{} instances differ only inside the hidden block $g$, and by \cref{lem:trade} that difference is indistinguishable to any one or two members of a group. This section turns that local statement into one about the whole run: unless three members of some single group of block $g$ have been probed against their corresponding centers set, the transcript of a \no{} run could equally well have come from a \yes{} or instance. So the two distributions can be told apart only through that one event, and \cref{sec:counting,sec:sparsity} bound its probability.

\medskip 
Fix a deterministic algorithm $\mathcal A$ that makes $q$ queries and run it on a \no{} instance.
 
\begin{definition}\label{def:exposed}
For a block $i$ and a row $j$, let
\begin{equation}\label{eq:touch}
  \touch_i(j):=\bigl\{u\in S_{i,j}:\ \mathcal A\text{ queried the pair }\{u,c\}
  \text{ for some center }c\in C_{i,j}\bigr\}
\end{equation}
be the subset of the special terminals $S_{i,j}$ that were probed against the centers in $C_{i,j}$. The pair $(i,j)$ is \emph{exposed} if $|\touch_i(j)|\ge3$, and the block $i$ is \emph{exposed} if $(i,j)$ is exposed for some $j$.
\end{definition}

\begin{lemma}
\label{lem:coupling}
For every deterministic algorithm $\mathcal A$ making $q$ queries,
\[
  \dtv\bigl(\trans(\mathcal A,\Dno),\,\trans(\mathcal A,\Dyes)\bigr)
  \ \le\ \Prob_{\Dno}\bigl[\,\text{block $g$ is exposed}\,\bigr].
\]
\end{lemma}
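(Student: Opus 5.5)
We will prove the bound by constructing an explicit coupling of a \no{} instance and a \yes{} instance drawn from the same configuration-space law. The coupling will make the two transcripts equal whenever block $g$ is not exposed in the \no{} run. Since $\dtv(P,Q)\le\Prob[X\neq Y]$ for any coupling $(X,Y)$ of $P$ and $Q$, this suffices.

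\textbf{The coupling.} Draw $\omega=(S,\{G_j\},\{S_{i,j}\},\{C_{i,j}(\cdot)\},g)$ uniformly and run $\mathcal A$ on $w_{\mathsf n}(\omega)$; let $\touch_g(j)$ be the resulting touch sets. Define $\omega'$ from $\omega$ by changing only the row orderings of block $g$. For each row $j$ with $|\touch_g(j)|\le2$, put $H:=\touch_g(j)$ and set $C'_{g,j}(\phi_H(r)):=C_{g,j}(r)$, using the permutation of \cref{lem:trade}. (If $|H|\le 2$ but the lemma's $\phi_H$ depends on the fixed ordering of $U$, this is harmless, since $U=S_{g,j}$ is unchanged.) For exposed rows, keep the ordering, or apply any fixed permutation. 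Then run $\mathcal A$ on $w_{\mathsf y}(\omega')$.

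\textbf{Step 1: transcripts agree off the bad event.} Suppose block $g$ is not exposed. We show by induction on $t$, using \ref{it:R1}, that the $t$-th query and answer coincide in the two runs. The queried pair is the same because the prefixes agree. For the answer, $\omega$ and $\omega'$ agree on everything except orderings inside block $g$, and the two metrics agree outside block $g$. So only pairs $\{u,c\}$ with $c\in C_{g,j}$ matter. If $u\notin S_{g,j}$, the answer is $2$ in both. If $u\in S_{g,j}$, then $u\in\touch_g(j)=H$, and the center $c=C_{g,j}(r)$ sits at position $\phi_H(r)$ in $\omega'$. Its \yes{} neighbor set restricted to $H$ equals its \no{} one by \cref{lem:trade}, so membership of $u$ agrees.

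\textbf{Step 2: $\omega'$ is uniform.} This is the main obstacle, because the permutation applied depends on the run and hence on $\omega$ itself. We will show that the map $\omega\mapsto\omega'$ is a bijection of the configuration space, so that $w_{\mathsf y}(\omega')\sim\Dyes$. The first attempt is to invert it. From $\omega'$, run $\mathcal A$ on $w_{\mathsf y}(\omega')$.

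\begin{itemize}
\item If block $g$ is unexposed in that run, Step 1's argument run in reverse shows that the \no{} run on the inverted $\omega$ yields the same transcript and touch sets. So the $H$'s are recoverable, and we can apply $\phi_H^{-1}$.
\item Exposed configurations must be handled consistently. The cleanest fix is to define the map on each row adaptively, as the run proceeds: the first time a third member of $S_{g,j}$ would touch row $j$, stop transforming. Alternatively, argue via a measure-preserving statement conditioned on the transcript prefix. A simpler alternative is to avoid bijectivity and compare probabilities directly: for each transcript $\tau$ in which $g$ is unexposed, the set of $\omega$ producing $\tau$ under \no{} and the set producing $\tau$ under \yes{} have equal size, via the injection $\omega\mapsto\omega'$, which is invertible on such sets because $H$ is read off from $\tau$.
\end{itemize}

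We will use the last route. It gives $\Prob_{\mathsf y}[\tau]\ge\Prob_{\mathsf n}[\tau]$ for every such $\tau$. Hence
\[
\dtv=\sum_\tau\bigl(\Prob_{\mathsf n}[\tau]-\Prob_{\mathsf y}[\tau]\bigr)^+\le\Prob_{\mathsf n}[\text{$g$ exposed}].
\]
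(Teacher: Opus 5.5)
Your final route (the per-transcript counting in your last bullet) is essentially the paper's proof. The paper defines $\Psi(\omega)=\omega'$ only on configurations whose block $g$ is unexposed in the \no{} run, and shows by induction via \ref{it:R1} and \cref{lem:trade} that the \yes{} run on $\omega'$ reproduces the \no{} transcript of $\omega$. It gets injectivity because the sets $\touch_g(j)$ can be recomputed from $\omega'$: from its transcript together with the unchanged $g$ and $S_{g,j}$. It then concludes $\Prob_{\Dno}[E,\ g\text{ not exposed}]\le\Prob_{\Dyes}[E]$ for every set $E$ of transcripts. So your Step 2 concern about making $\omega'$ uniform, and the row-by-row treatment of exposed rows, can simply be dropped.

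One bookkeeping step needs fixing. Whether block $g$ is exposed is not a function of the transcript $\tau$, since it depends on the hidden index $g$ and the hidden groups $S_{g,j}$. So ``transcripts in which $g$ is unexposed'' is not well defined. For the same reason the two preimage sets of $\tau$ need not have equal size, and $\Prob_{\Dyes}[\tau]\ge\Prob_{\Dno}[\tau]$ can fail for a $\tau$ produced by both exposed and unexposed configurations. What your injection actually proves is that, for every $\tau$,
\[
\Prob_{\Dno}[\tau,\ g\text{ not exposed}]\le\Prob_{\Dyes}[\tau].
\]
Hence $\bigl(\Prob_{\Dno}[\tau]-\Prob_{\Dyes}[\tau]\bigr)^+\le\Prob_{\Dno}[\tau,\ g\text{ exposed}]$, and summing over $\tau$ gives the stated bound. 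With that correction your argument is complete and matches the paper's.
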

 
\begin{proof}
Let $\omega$ be a configuration for which block $g$ is not exposed, in the sense of
\cref{def:exposed}, in the \no{} run.  Define a new
configuration $\omega'$ by keeping \ref{it:C1}, \ref{it:C2}, \ref{it:C3}, \ref{it:C5} and the ordering of every row outside block $g$ unchanged, and reordering the rows of block $g$ by setting
\[
    C'_{g,j}\bigl(\phi_{\touch_g(j)}(r)\bigr):=C_{g,j}(r) \qquad\text{for all }j\in[A],\ r\in[b],
\]
where $\phi_H$ is the map of \cref{lem:trade}, applied to the group $S_{g,j}$, i.e., $H=\touch_g(j)$. Note that this is legitimate
because $|\touch_g(j)|\le2$ for every $j$.  Since each $\phi_{\touch_g(j)}$ is a permutation of
$[b]$, each $C'_{g,j}$ is again an ordering of the row $C_{g,j}$, so $\omega'$ is a configuration.
Write $\Psi(\omega):=\omega'$.
 
\smallskip
\emph{The \yes{} instance of $\omega'$ gives the same transcript as the \no{} instance of $\omega$.}
We show by induction on $t$ that the first $t$ pairs and answers agree.  Given that the first $t-1$
agree, property \ref{it:R1} says the $t$-th pair is the same in both runs, so it suffices to check
that this pair receives the same answer. Distances between two terminals remain unchanged by
\ref{it:C1} and \ref{it:C2}, and two centers are always at distance $2$.
It remains to look at a pair $\{u,c\}$ with $u$ a terminal and $c$ is a center in $C_g$. 
The answer to such a pair is $1$ when $u$ lies in the neighbor set of $c$ and $2$ otherwise, so we need to ensure that $u$ lies in the neighbor set of $c$ in one instance exactly when it does in the other. 
Let $j$ be the index of the row containing $c$, and let $r$ be the index with $c=C_{g,j}(r)$ in $\omega$, so that $c=C'_{g,j}\bigl(\phi_{\touch_g(j)}(r)\bigr)$ in $\omega'$. Since the rows are fixed once and for all, we have that  $j$ and $S_{g,j}$ are the same in both. 
The answer to $\{u,c\}$ is thus decided by $\nb^{\mathsf N}_{g,j}(r)$ in the \no{} run and by $\nb^{\mathsf Y}_{g,j}(\phi_H(r))$ in the \yes{} run. Both are subsets of $S_{g,j}$, so if $u\notin S_{g,j}$ the answer is $2$ in both. If $u\in S_{g,j}$, then $u\in H$.
Block $g$ is not exposed, so $|H|\le2$ and \cref{lem:trade} applies: the two sets just named have the same intersection with $H$. Since $u$ lies in $H$, it belongs to both of them or to neither, and the answer is again the same in
both runs.
 
\smallskip
\emph{$\Psi$ is one-to-one.}  Suppose $\omega'=\Psi(\omega)$ is given.  The transcript of the \yes{} run
on $\omega'$ equals that of the \no{} run on $\omega$, by the previous paragraph, and the rows
are public. So, by \eqref{eq:touch} the sets $\touch_g(j)$ are determined by $\omega'$.  Hence so are the maps $\phi_{\touch_g(j)}$, and the orderings $C_{g,j}$ can be recovered from the
$C'_{g,j}$ by inverting them.  All other parts of $\omega$ are visible in $\omega'$ already.
 
Now fix any set $E$ of transcripts.  Write $\Omega$ for the finite set of all configurations and,
for a configuration $\omega$, write $w_{\mathsf n}(\omega)$ and $w_{\mathsf y}(\omega)$ for the two
metrics it determines; both $\Dno$ and $\Dyes$ draw $\omega$ uniformly from $\Omega$, so the two
worlds differ only in which of the two metrics the algorithm is run on. Define
\[
  \mathcal E_{\mathsf n}:=\bigl\{\omega\in\Omega:\ \trans(\mathcal A,w_{\mathsf n}(\omega))\in E\bigr\},
  \qquad
  \mathcal E_{\mathsf y}:=\bigl\{\omega\in\Omega:\ \trans(\mathcal A,w_{\mathsf y}(\omega))\in E\bigr\},
\]
and let $\mathcal G$ be the set of configurations whose block $g$ is not exposed in the \no{} run,
so that $\Psi$ is defined on $\mathcal G$.  Since $\Psi$ injects $\mathcal E_{\mathsf n}\cap\mathcal G$
into $\mathcal E_{\mathsf y}$, we have
$|\mathcal E_{\mathsf n}\cap\mathcal G|\le|\mathcal E_{\mathsf y}|$, and dividing by $|\Omega|$
turns this into $\Prob_{\Dno}[\,E,\ g\text{ not exposed}\,]\le\Prob_{\Dyes}[E]$.
Thus
\[
  \Prob_{\Dno}[E]
  \le\Prob_{\Dno}[\,g\text{ exposed}\,]+\Prob_{\Dno}[\,E,\ g\text{ not exposed}\,]
  \le\Prob_{\Dno}[\,g\text{ exposed}\,]+\Prob_{\Dyes}[E].
\]
So, $\dtv(\trans(\mathcal A,\Dno),\trans(\mathcal A,\Dyes))=\max_E\bigl(\Prob_{\Dno}[E]-\Prob_{\Dyes}[E]\bigr) \le\Prob_{\Dno}[\,g\text{ exposed}\,]$. 
\end{proof}
 
It remains to show that an algorithm making few queries exposes few blocks.  The difficulty is that the algorithm chooses which terminals to investigate as it goes, using everything it has seen.

\section{What few queries can reveal}\label{sec:counting}

We showed to bound $\dtv\bigl(\trans(\mathcal A,\Dno),\,\trans(\mathcal A,\Dyes)\bigr)$ it suffices to bound $\Prob_{\Dno}[\,g\text{ exposed}\,]$. In this section (and~\Cref{sec:sparsity}) we assume the algorithm is run on a \no{} instance.
We assume no pair is queried twice, which only helps the algorithm.  We write $\mathcal F_t$ for everything in the transcript
before the $t$-th query, that is, for the prefix
$\langle(\{u_1,v_1\},a_1),\dots,(\{u_{t-1},v_{t-1}\},a_{t-1})\rangle$. Note that by \ref{it:R1} the $t$-th pair is determined by $\mathcal F_t$.

\subsection{A counting principle}

To show that few configurations are {\em bad}, we find at least $F$ ways to change each bad configuration into one that is still consistent with what the algorithm has seen, and we check that no bad configuration is reached in more than $B$ ways. Counting these changes twice then bounds the fraction of bad configurations by $B/F$, which is the lemma below.
\begin{lemma}
\label{lem:exchange}
Let $\Omega$ be a finite set, let $\mathcal E\subseteq\Omega$, and let $\mathcal M\subseteq\mathcal E\times\Omega$ be a set
of ordered pairs, called \emph{moves}.  Suppose every $\xi\in\mathcal E$ appears as the first entry of at
least $F$ moves and every $\zeta\in\Omega$ appears as the second entry of at most $B$ moves.  Then
$|\mathcal E|\le(B/F)\,|\Omega|$.
\end{lemma}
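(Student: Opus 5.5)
The plan is to count the moves twice. Let $M:=|\mathcal M|$. I will compute $M$ once by grouping the moves according to their first entry, and once by grouping them according to their second entry, and then compare the two totals.

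\emph{Lower bound from first entries.} Every move in $\mathcal M$ has its first entry in $\mathcal E$, and by hypothesis each $\xi\in\mathcal E$ is the first entry of at least $F$ moves. Since the moves with different first entries are different ordered pairs, this gives
\[
  M=\sum_{\xi\in\mathcal E}\bigl|\{\zeta:(\xi,\zeta)\in\mathcal M\}\bigr|\ \ge\ F\,|\mathcal E|.
\]

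\emph{Upper bound from second entries.} Every move has its second entry in $\Omega$, and each $\zeta\in\Omega$ is the second entry of at most $B$ moves, so
\[
  M=\sum_{\zeta\in\Omega}\bigl|\{\xi:(\xi,\zeta)\in\mathcal M\}\bigr|\ \le\ B\,|\Omega|.
\]

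Chaining the two bounds gives $F|\mathcal E|\le B|\Omega|$. Dividing by $F$ yields $|\mathcal E|\le(B/F)|\Omega|$, assuming $F>0$, which is implicit in the statement since it divides by $F$.

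Nothing here is a real obstacle. The only points to check are the degenerate cases. If $\mathcal E$ is empty, the inequality holds trivially. The bound is also exact as stated, with no stray constant factors, because the moves are ordered pairs and each is counted exactly once in each of the two sums.
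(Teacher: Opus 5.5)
Your proposal is correct and is the same argument as the paper's proof: count the moves once by first entry and once by second entry to get $F|\mathcal E|\le|\mathcal M|\le B|\Omega|$. Your remark that $F>0$ is needed in order to divide is a harmless extra; the paper leaves it implicit.
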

 
\begin{proof}
Counting the moves in two ways, $F|\mathcal E|\le|\mathcal M|\le B|\Omega|$.
\end{proof}
 
We apply this as follows.  Condition on the transcript before time $t$ and let $\Omega_t$ be the set of
configurations consistent with it, i.e., the set of configurations under which the first $t-1$ queries of $\mathcal A$ receive exactly the answers listed in $\mathcal F_t$.  
Since all configurations are equally likely and the transcript is a function of the configuration, the conditional distribution of the configuration given $\mathcal F_t$ is uniform on $\Omega_t$.
 
Let $\mathcal E\subseteq\Omega_t$ be an event whose conditional probability we want to bound.  We use \cref{lem:exchange} with $\Omega_t$ in the role of $\Omega$: we exhibit a set $\mathcal M\subseteq\mathcal E\times\Omega_t$ of moves such that every $\xi\in\mathcal E$ is the first entry of at least $F$ of them and every $\zeta\in\Omega_t$ is the second entry of at most $B$ of them.  The lemma then gives $|\mathcal E|\le(B/F)|\Omega_t|$, and since the conditional law is uniform on $\Omega_t$,
\begin{equation}\label{eq:exchapply}
  \Prob[\mathcal E\mid\mathcal F_t]=\frac{|\mathcal E|}{|\Omega_t|}\le\frac BF .
\end{equation}
Note that both $F$ and $B$ may depend on $t$ and on the transcript, which create no issue since \eqref{eq:exchapply} is a bound on a conditional probability.
 
\medskip
The moves are of one kind throughout.  Given a configuration $\xi$ and two terminal labels $u$
and $x$, \emph{transposing $u$ and $x$} means putting each in the other's place in \ref{it:C1},
\ref{it:C2} and \ref{it:C3} at once, and leaving \ref{it:C4} and \ref{it:C5} alone: the two labels
exchange their roles as special or ordinary, their positions in the clusters, and their positions in
every one of the $R$ group partitions, so that the subsets attached to the centers have $u$ replaced by $x$
and vice versa.  The result is the second entry of the move.  Two things then have to be checked in
each application, and they are what the exclusion rules in the proofs below are for.  A move must
\emph{land in $\Omega_t$}: the transposition must leave every one of the first $t-1$ answers as it
was.
Moreover, one must bound $B$, by bounding the number of moves that could have produced a given target $\zeta$.

\subsection{Bookkeeping}

Call an ordinary terminal \emph{marked} once some query involving it has returned $\lambda$.  Let
$\rho$ be the first time at which $\theta K$ ordinary terminals are marked, and set $\rho:=q+1$ if
this never happens. Define the event $\mathcal R:=\{\rho\le q\}$.  All quantities below refer to the first $\rho-1$ queries; conditioned on $\mathcal R^{\mathsf c}$ (i.e., $q>\rho$), all quantities hold for the first $q$ queries.

For a terminal $u$ let $\deg_t(u)$ be the number of queries involving $u$ before time $t$.  For a
cluster (of ordinary terminals) $G$ let $e_t(G)$ be the number of terminals outside $G$ that have been queried against a member of $G$ before time $t$; since a query has two ends,
\begin{equation}\label{eq:ebudget} 
\sum_{a\in[A]}e_{q+1}(G_a)\le2q .
\end{equation}
For a special terminal $s$ let
\[
  b_t(s):=\#\bigl\{x\text{ ordinary}:\ \{x,c\}\text{ was queried before time }t
  \text{ for some center }c\text{ at distance }1\text{ from }s\bigr\}
\]
count the ordinary terminals that have been tried against a center locating at distance $1$ from $s$.
In a \no{} block every center has at most two terminals at distance $1$, so each query
contributes to $b_{q+1}(s)$ for at most two terminals $s$ and
\begin{equation}\label{eq:bbudget}
  \sum_{s\in S}b_{q+1}(s)\le2q .
\end{equation}

\subsection{Clusters are hard to find}

\begin{lemma}\label{lem:cluster}
$\Prob[\mathcal R]=O\!\left(\dfrac q{AK}\right).$
\end{lemma}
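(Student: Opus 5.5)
The plan is to bound the expected number of ordinary terminals that become marked within the first $q$ queries, and then use Markov's inequality. $\mathcal R$ needs $\theta K$ marked terminals, so it is enough to show that the expected number of queries returning $\lambda$ is $O(q/A)$. Each such query marks at most two terminals. Then
\[
\Prob[\mathcal R]\le \frac{2\,\Ex[\#\lambda\text{-answers before }\rho]}{\theta K}=O\!\left(\frac{q}{AK}\right).
\]

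The core step is a per-query conditional bound. Fix $t<\rho$ and condition on $\mathcal F_t$. Then the configuration is uniform on $\Omega_t$ and the $t$-th pair $\{u,x\}$ is determined. If either endpoint is a center, the answer is never $\lambda$. If $u,x$ are already known to lie in a common cluster, or known not to, the query carries no risk, or is wasted; queries of the second kind we simply charge. The interesting case is when $u$ and $x$ are terminals whose co-clustering is not determined by $\mathcal F_t$. I want to show
\[
\Prob[w(u,x)=\lambda\mid\mathcal F_t]=O(1/A)
\]
via \cref{lem:exchange}, in the form \eqref{eq:exchapply}. Let $\mathcal E$ be the set of configurations in $\Omega_t$ where $u,x$ are ordinary and in the same cluster. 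The move is to transpose $x$ with some ordinary terminal $y$ such that:
\begin{itemize}
\item $y$ is unmarked;
\item $y$ has never been queried against anything in $x$'s cluster or $u$'s known component;
\item $y$ lies in a cluster that $x$ has never been queried against.
\end{itemize}
Such a transposition changes none of the first $t-1$ answers, since every affected pair previously returned $2$ and still does. It also puts $x$ in a different cluster from $u$, so the target lies in $\Omega_t$. The number of valid $y$ is the lower bound $F$. Since $t<\rho$, fewer than $\theta K$ terminals are marked, and excluding clusters heavily probed by $x$ or $u$ removes at most a constant fraction. That should leave $F\ge cK$.

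For the count $B$ of moves reaching a given $\zeta$: from $\zeta$, $x$ sits in some cluster and the move is undone by transposing $x$ with any member $y$ of $u$'s cluster that is eligible. That gives at most $R$ choices, so $B\le R$, and $B/F=O(R/K)=O(1/A)$.

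The main obstacle is controlling $F$ when $x$ or $u$ has high degree $\deg_t$. If $x$ has been queried against many clusters, the eligible $y$ might be few. I would handle this by restricting $y$ to clusters $G$ with no queries to $x$; those number at least $A-\deg_t(x)$. When $\deg_t(x)\ge A/2$, I would instead transpose $u$, or else charge such queries directly, since there are at most $2q/(A/2)$ such high-degree terminals. Marked terminals are excluded by the definition of $\rho$ (at most $\theta K=K/8$), leaving $\Omega(K)$ candidates. Summing the $O(1/A)$ bound over the $q$ queries gives the expected count $O(q/A)$, and hence the lemma.
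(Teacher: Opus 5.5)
\textbf{Gap: the lower bound $F\ge cK$ is not justified.} Your overall scheme matches the paper's: Markov on the number of marks, a per-query exchange bound via transpositions, $B\le R$ because the incoming terminal now sits in a cluster of size $R$, and direct charging of high-degree terminals. The hole is in $F$.

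Your second exclusion is necessary. After the transposition, $y$ sits in the common cluster $G$ of $u$ and $x$, so an earlier query between $y$ and a member of $G\setminus\{x\}$ would flip from $2$ to $\lambda$. But the number of labels it removes is $e_t(G)$, the number of terminals already queried against \emph{some} member of $G$. This quantity is not controlled by $\deg_t(u)$, $\deg_t(x)$ or the number of marks, because queries made through the other $R-2$ members of $G$ count too.

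For example, one ordinary terminal $h$ queried against all but $o(K)$ terminals makes $e_t(G_h)\ge K-o(K)$ while creating at most $R$ marks. Now take configurations in $\mathcal E$ where fresh $u,x$ (drawn from the unqueried terminals) lie in $G_h$. Every surviving $y$ must also be unqueried, so only $o(K)$ choices remain. On that part of $\mathcal E$, \cref{lem:exchange} gives no $O(1/A)$ bound, and your argument has no other way to handle it. Your obstacle paragraph only treats the degree of the moved terminal, which corresponds to your third exclusion, not this one.

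\textbf{The missing ingredient is the paper's case (ii).} Restrict the exchange event to configurations with $e_t(G)<\theta K$ (and small degree for the moved endpoint), and count the remaining marks directly. Since $e_t(G)$ is nondecreasing in $t$, a mark made while $e_t(G)\ge\theta K$ lies in a cluster with $e_{q+1}(G)\ge\theta K$. By \eqref{eq:ebudget} there are at most $2q/(\theta K)$ such clusters, so they hold at most $2qR/(\theta K)=O(q/A)$ marks.

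\textbf{Smaller repairs.}
\begin{enumerate}
\item The transposed endpoint must be an \emph{unmarked} one. Moving a marked terminal out of $G$ destroys its earlier $\lambda$ answers, so ``transpose $u$ instead'' fails when $u$ is marked. Fix an unmarked end, bound the probability that it gets marked, and sum over the $2q$ ends.
\item Count marks, not $\lambda$-answers. Queries inside an already connected component return $\lambda$ without creating marks, so the expected number of $\lambda$-answers need not be $O(q/A)$, and your displayed reduction does not give the bound.
\item The range of $t$ must include $t=\rho$. On $\mathcal R$, fewer than $\theta K$ marks occur strictly before time $\rho$.
\end{enumerate}
With these changes your argument becomes the paper's.
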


\begin{proof}
Let $M$ be the number of terminals that get marked at some time $t\le\rho$.  On $\mathcal R$ we have $M\ge\theta K$, so by Markov's inequality,  $\Prob[\mathcal R]\le\Ex[M]/(\theta K)$ and to show the lemma it is enough to prove $\Ex[M]=O(q/A)$.  Classify a mark, occurring at time $t\le\rho$ at a terminal $u$ lying in cluster
$G$, by the first of the following that applies.
 
\emph{(i) $\deg_t(u)\ge D$.}  At most $2q/D=O(q/A)$ terminals are ever queried $D$ times, and each is
marked at most once.
 
\emph{(ii) $e_t(G)\ge\theta K$.}  By \eqref{eq:ebudget} at most $2q/(\theta K)$ clusters ever reach
$e_{q+1}(G)\ge\theta K$, and each has $R$ members, so this accounts for at most
$2qR/(\theta K)=O(q/A)$ marks.
 
\emph{(iii) $\deg_t(u)<D$ and $e_t(G)<\theta K$.}  Fix $t$ and one of the two ends $u$ of the $t$-th
pair, and let $v$ be the other end.  Let $\mathcal E$ be the set of configurations in $\Omega_t$ for which
$u$ is unmarked, $u$ and $v$ lie in a common cluster $G$ with $e_t(G)<\theta K$, and
$\deg_t(u)<D$. Now on $\mathcal E$ the $t$-th answer is $\lambda$ and marks $u$.  As moves, take the
transposition of $u$ with an ordinary terminal $x$ outside $G$, excluding $x$ if
\begin{enumerate}[label=(\alph*),leftmargin=2.2em,itemsep=1pt]
  \item $x$ is marked;
  \item $x$ has been queried against a member of $G$; or
  \item the cluster of $x$ contains a terminal that has been queried against $u$.
\end{enumerate}

Rule (a) rules out fewer than $\theta K$ labels because $t\le\rho$, rule (b) fewer than $\theta K$
because $e_t(G)<\theta K$, and rule (c) at most $DR=\theta K$ because $\deg_t(u)<D$ and each cluster has $R$ members; the requirement $x\notin G$ rules out $R$ more. As $\theta=1/8$, at least $K-3\theta K-R\ge\tfrac12K$ choices of $x$ remain, since $K-3\theta K=\tfrac58K$ and
$R/K=1/A=m^{-3}\le\tfrac18$ for every $m\ge2$, so $F=\Omega(K)$ (as in \cref{lem:exchange}).
 
Each such move stays in $\Omega_t$.  Queries involving neither $u$ nor $x$ are unaffected, since
only the positions of $u$ and $x$ in \ref{it:C2} change and both remain ordinary, so no
terminal--center answer changes either.  Neither $u$ nor $x$ has received the answer $\lambda$, so
no earlier answer is destroyed. Further, no earlier answer becomes $\lambda$, since a new $\lambda$ at
$u$ would need an earlier query between $u$ and the cluster of $x$, excluded by (c), while a new
$\lambda$ at $x$ would need an earlier query between $x$ and $G$, excluded by (b).
 
Finally, fix a configuration $\zeta$ reached by such a move.  In $\zeta$ the terminal $x$ lies in the
cluster of $v$, and the move is determined by $x$, so at most $R$ moves reach $\zeta$ and $B\le R$.
By \cref{lem:exchange}, $\Prob[\mathcal E\mid\mathcal F_t]=O(R/K)=O(1/A)$.  Summing over the $2q$ ends of the $q$ queries bounds the expected number of marks of type (iii) by $O(q/A)$.

\smallskip
Summing over all three cases, $\Ex[M] = O(q/A)$. 
\end{proof}

\subsection{Special terminals are hard to recognize}

\begin{lemma}
\label{lem:special}
Let $t\le\rho$ and suppose that, according to $\mathcal F_t$, the terminal $u$ satisfies
$\deg_t(u)<D$ and has never received the answer $1$.  Then, for every block $i$, every row $j$ and
every center $c$, 
\begin{align}
  \Prob\bigl[u\in S,\ b_t(u)<\theta K\ \big|\ \mathcal F_t\bigr]&=\text{$O(d/R)$},\label{eq:post1}\\
  \Prob\bigl[u\in S,\ b_t(u)<\theta K,\ w(u,c)=1\ \big|\ \mathcal F_t\bigr]&=O(1/K),\label{eq:post2}\\
  \Prob\bigl[u\in S,\ b_t(u)<\theta K,\ u\in S_{i,j}\ \big|\ \mathcal F_t\bigr]&=\text{$O(d/K)$}.
    \label{eq:post3}
\end{align}
\end{lemma}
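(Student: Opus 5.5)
We prove all three bounds with the counting principle of \cref{lem:exchange}, applied on $\Omega_t$ as in \eqref{eq:exchapply}. In each case the moves transpose $u$ with an \emph{ordinary} terminal $x$, which turns a configuration in which $u$ is special into one in which $x$ takes $u$'s place in $S$ and in every group partition, while $u$ takes $x$'s place in its cluster. We take $\mathcal E$ to be the event on the left-hand side, intersected with the consistency of $\Omega_t$.

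A move must stay in $\Omega_t$, so we exclude $x$ if any of the following holds:
\begin{enumerate}[label=(\alph*),leftmargin=2.2em,itemsep=1pt]
  \item $x$ is marked; there are fewer than $\theta K$ such $x$ since $t\le\rho$.
  \item $\deg_t(x)\ge D$; there are at most $2q/D$ such $x$. This is not automatically $o(K)$, so instead we exclude those $x$ that were queried against a center at distance $1$ from $u$; by $b_t(u)<\theta K$ there are fewer than $\theta K$ of them.
  \item The cluster of $x$ contains a terminal queried against $u$; there are at most $DR=\theta K$ such $x$, because $\deg_t(u)<D$.
  \item $x$ has been queried against a member of $u$'s would-be new cluster. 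This is really the condition on the cluster $G$ of $x$, which is the cluster $u$ joins. We handle it as in \cref{lem:cluster}, either through $e_t(G)<\theta K$ or by bounding the bad clusters.
\end{enumerate}

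With these exclusions, every earlier answer is preserved. Queries of $u$ against centers all returned $2$, and after the move $u$ is ordinary, so they still return $2$. Queries of $x$ against centers that previously returned $2$ must still return $2$ when $x$ inherits $u$'s neighbor sets; this is exactly what the $b_t(u)$ exclusion ensures, since only centers at distance $1$ from $u$ could change. Queries between terminals produce no new $\lambda$ by (a), (c) and (d), as in \cref{lem:cluster}. Hence $F\ge K-O(\theta K)-R=\Omega(K)$.

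For \eqref{eq:post1}, a target $\zeta$ has $u$ ordinary, and the move is determined by the label $x$, which is one of the $dA$ special terminals of $\zeta$. So $B\le dA$ and the probability is $O(dA/K)=O(d/R)$. For \eqref{eq:post3}, $x$ must be the member of $S_{i,j}$ in $\zeta$ sitting at $u$'s old slot, so $B\le d$ and we get $O(d/K)$. For \eqref{eq:post2}, $x$ must be a neighbor of $c$ in $\zeta$; since $c$ has at most two neighbors, $B\le2$ and the bound is $O(1/K)$.

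The main obstacle is checking that the transposition preserves all answers where $x$ inherits $u$'s center adjacencies in all $R$ blocks. The exclusion based on $b_t(u)$ is designed precisely for this, since $b_t(u)$ counts ordinary terminals already tested against $u$'s distance-$1$ centers. There is one subtlety: $b_t(u)$ is a quantity of the configuration, not of the transcript, which is why it sits inside the event. The excluded set therefore depends on $\xi$, but its size is still below $\theta K$. A related concern is bounding cluster-side conflicts without $e_t$ control; if that fails, we would add the condition $e_t(G_x)<\theta K$ and discard the clusters with large $e_t$, at most $2q/(\theta K)$ of them. This costs $O(qR/K)$ labels, which is fine while $q\le K$, and we may assume this since $q\ll K^{7/5}$ forces only this mild regime; otherwise we use the $\deg$-based argument.
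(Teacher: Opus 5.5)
Your proposal is essentially the paper's proof. It uses the same transposition of $u$ with an ordinary $x$ and the same effective exclusions: $x$ marked, $x$ whose cluster contains a terminal queried against $u$, and $x$ counted by $b_t(u)$, the last depending on the configuration. The consistency checks and the bounds $B\le dA$, $B\le 2$, $B\le d$ also match. Your condition (d) is redundant and should simply be deleted together with the whole $e_t$ discussion. A query between $x$ and a cluster-mate would have returned $\lambda$ and marked $x$, so it is already excluded by (a), and a query $\{x,u\}$ is covered by (c). In particular, drop the claim that $q\ll K^{7/5}$ lets you assume $q\le K$: it is false, since the relevant $q$ is of order $K^{7/5}/\log A$, and the lemma holds for every $q$ without it.
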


\begin{proof}
One argument proves all three bounds.  The moves are the same in each case, so the lower bound
$F=\Omega(K)$ on the number of moves available at a configuration is shown once and for all. What separates the three bounds from each other is only the upper bound $B$ on the number of moves that can reach a given target.

Fix which of the three bounds is being proved. Let $\mathcal E$ be the set of configurations in $\Omega_t$ satisfying the event in question. As moves,
consider swapping the role of $u$ with an ordinary terminal $x$: i.e., put $u$ in the position of $x$
in \ref{it:C2} and put $x$ in place of $u$ in \ref{it:C1} and in every one of the $R$ group
partitions \ref{it:C3}, so that the subsets attached to the centers have $u$ replaced by $x$. Exclude $x$
if
\begin{enumerate}[label=(\alph*),leftmargin=2.2em,itemsep=1pt]
  \item $x$ is marked;
  \item the cluster of $x$ contains a terminal that has been queried against $u$; or
  \item $x$ is one of the terminals counted by $b_t(u)$.
\end{enumerate}

Rule (a) rules out fewer than $\theta K$ labels since $t\le\rho$, rule (b) at most $DR=\theta K$
since $\deg_t(u)<D$, and rule (c) fewer than $\theta K$ on $\mathcal E$. Hence at least
$K-3\theta K=\tfrac58K$ choices remain and $F=\Omega(K)$.

Each move stays in $\Omega_t$.  Queries involving neither $u$ nor $x$ are unaffected.  A query
between $u$ and another terminal was answered $2$ before the move, since $u$ was special, and could
become $\lambda$ only if $u$'s new cluster had already been queried against $u$, which (b) excludes.
A query between $x$ and another terminal was answered $2$ before the move, because $x$ was unmarked
by (a), and is answered $2$ after it, because $x$ is now special.  A query between $u$ and a center
was answered $2$ before the move by hypothesis, and is answered $2$ after it because $u$ is now
ordinary.  A query between $x$ and a center was answered $2$ before the move and could become $1$
only if that center were at distance $1$ from $u$ before the move, which (c) excludes.

\smallskip
It remains to count, for a configuration $\zeta$ reached by a move, how many moves reach it.  A move
is determined by the label $x$, and in $\zeta$ that label is special, so at most $|S|=dA$ moves reach
$\zeta$ and \eqref{eq:post1} follows from \cref{lem:exchange}:
$\Prob[\mathcal E \mid\mathcal F_t] = O(dA/K) = O(d/R)$, as $K=AR$. For \eqref{eq:post2}, the move turns $w(u,c)=1$ into
$w(x,c)=1$, and in the \no{} world at most two terminals are at distance $1$ from $c$, so at most two
moves reach $\zeta$ and $\Prob[\mathcal E \mid\mathcal F_t]=O(1/K)$.  For \eqref{eq:post3}, the move places
$x$ in the group $S_{i,j}$, which has $d$ members, so at most $d$ moves reach $\zeta$ and
$\Prob[\mathcal E \mid\mathcal F_t]=O(d/K)$.
\end{proof}

\subsection{Few noticed terminals, few lucky queries}

The three bounds in \cref{lem:special} are proved under the same hypotheses for a terminal:
no earlier query to it was answered $1$, it has received fewer than $D$ earlier queries, and
$b_t<\theta K$.  Terminals that fail at least one of these hypotheses are precisely the ones
on which the algorithm has already obtained some leverage.  We collect them in the following definition. 
\begin{definition}\label{def:noticed}
A special terminal is \emph{noticed} if, at some time $t\le\rho$, it receives the answer $1$, or its
query count reaches $D$, or $b_t$ reaches $\theta K$.  We write $W$ for the set of noticed special
terminals.
\end{definition}
 
Few terminals are noticed in $q$ queries.

\begin{lemma}
\label{lem:W}
$\Ex\bigl[|W|\bigr]= O\!\left(\dfrac{dq}K\right)$.
\end{lemma}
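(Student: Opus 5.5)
We split the noticed terminals according to the first of the three triggers in \cref{def:noticed} that fires, just as the marks were classified in \cref{lem:cluster}, and bound the expected number in each class separately.

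\emph{Query count reaches $D$.} Every query has two ends, so at most $2q/D=2q/(\theta A)$ terminals ever reach degree $D$. Since $q/A=qR/K$, this is not yet of order $dq/K$. We therefore do not count all such terminals, only the special ones. Let $t_u$ be the time at which $u$'s degree first reaches $D$. At that moment $\deg_{t_u-1}(u)<D$, $u$ has not yet received a $1$ (otherwise another trigger fired first), and $b<\theta K$. So \eqref{eq:post1}, applied at the time of $u$'s last query before reaching degree $D$, gives probability $O(d/R)$ that $u$ is special. There are at most $2q/D$ candidate terminals, each special with probability $O(d/R)$, so the expected contribution is $O\bigl((q/A)(d/R)\bigr)=O(dq/K)$. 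One subtlety is that the candidates are chosen adaptively. We handle this by summing over queries: at the query that brings $u$ to degree $D$, we charge $\Prob[u\in S,\,b<\theta K\mid\mathcal F_t]\cdot\mathbf 1[\deg_t(u)=D-1]$. The sum of the indicators over both ends of all queries is at most $2q/D$ deterministically, so linearity finishes the bound.

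\emph{Receives the answer $1$ first.} Consider the query at time $t$ between $u$ and a center $c$, where $u$ still has degree below $D$, has no earlier $1$, and has $b_t(u)<\theta K$. By \eqref{eq:post2}, the probability that this answer is $1$ with $u$ special is $O(1/K)$. Summing over at most $q$ terminal--center queries gives $O(q/K)$.

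\emph{$b_t(u)$ reaches $\theta K$ first.} By \eqref{eq:bbudget}, $\sum_s b_{q+1}(s)\le 2q$, so at most $2q/(\theta K)=O(q/K)$ special terminals can ever reach $b\ge\theta K$. This holds deterministically.

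Adding the three classes gives $\Ex|W|=O(dq/K)$. All quantities are evaluated only up to time $\rho$, which is consistent with the hypothesis $t\le\rho$ in \cref{lem:special}.

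The main obstacle is the degree class. There the naive count $q/A$ is too weak by a factor of $R$, and the gain has to come from \eqref{eq:post1} applied adaptively. We must check carefully that the hypotheses of \cref{lem:special} hold at the charging time, that is, that no other trigger fired earlier. We must also check that the event we charge is exactly the event ``$u\in S$ with $b<\theta K$,'' so that the conditional bound applies before the degree crosses $D$.
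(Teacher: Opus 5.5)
Your proposal is correct and follows the same route as the paper. You charge each noticed terminal to its first trigger. The answer-$1$ class is bounded by \eqref{eq:post2} summed over the $q$ queries, the degree class by \eqref{eq:post1} summed adaptively over the at most $2q/D$ times a count reaches $D$ (using $DR=\theta K$), and the $b_t$ class deterministically by \eqref{eq:bbudget}. Summing over both ends of each query is slightly cleaner than the paper's single terminal $u_t$ per time, since one query can bring two counts to $D$ at once.
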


\begin{proof}
Charge each noticed terminal to the first of the three events that makes it noticed.

If the answer $1$ comes first, the terminal had no earlier answer $1$, fewer than $D$ earlier
queries, and $b_t<\theta K$ at the time.  A pair receiving the answer $1$ consists of a terminal and
a center, so \eqref{eq:post2} applies with $c$ the center end of the $t$-th pair and $u$ its terminal
end, and bounds the conditional probability by $O(1/K)$.  Summing over the at most $q$ queries gives
$O(q/K)$. 

If the count $D$ comes first, then at the time $u$ has had no answer $1$ and $b_t(u)<\theta K$, so
\eqref{eq:post1} bounds the conditional probability that $u$ is special by $O(d/R)$.  
To turn this into a bound on the expected number of terminals charged here, let $\mathcal{T}$ be the set
of times $t\le\rho$ at which the $t$-th query brings some terminal's count to $D$, and for $t\in \mathcal{T}$
let $u_t$ be that terminal.  Whether $t\in \mathcal{T}$, and the identity of $u_t$, are determined by
$\mathcal F_t$: query counts are read off the transcript, and by \ref{it:R1} the $t$-th pair is itself
determined by $\mathcal F_t$.  The number of terminals charged in this case is therefore at most
$\sum_{t\in \mathcal{T}}\mathbf 1\{u_t\in S,\ b_t(u_t)<\theta K\}$, whose expectation is 
$\sum_{t\in \mathcal{T}}\Prob\bigl[u_t\in S,\ b_t(u_t)<\theta K\ \big|\ \mathcal F_t\bigr] \le O(d/R)\cdot\Ex[|T|]$
by \eqref{eq:post1}.  Every terminal reaches the count $D$ at most once and the query counts sum to
$2q$, so $|T|\le2q/D$. Since $DR=\theta K$ and $\theta=\tfrac18$, this case
contributes $O\bigl(dq/(DR)\bigr)=O\bigl(dq/(\theta K)\bigr)=O(dq/K)$.

If $b_t\ge\theta K$ comes first, then by \eqref{eq:bbudget} at most $2q/(\theta K)=O(q/K)$ terminals
can ever be charged this way.
\end{proof}

Finally, let $L$ count the queried pairs $\{u,c\}$, at times $t\le\rho$, for which $c\in C_{i,j}$ is a
center, $u$ is a special terminal that is \emph{not} in $W$, and $u$ belongs to the group
$S_{i,j}$. The pair $(i,j)$ is determined by the rows being public.  Not being in $W$ means that such a $u$ has, at the time of the query, no earlier answer $1$,
fewer than $D$ earlier queries, and $b_t(u)<\theta K$, so \eqref{eq:post3} applies and bounds the
conditional probability of each such occurrence by $O(d/K)$. Summing over the at most $q$
queries,
\begin{equation}
 \label{eq:L}
  \Ex[L]= O\!\left(\frac{dq}K\right).
\end{equation}

\section{Few groups have three noticed members}\label{sec:sparsity}

The set $W$ of \cref{lem:W} is chosen by the algorithm using everything it has seen, so we need a
statement that holds for all sets of terminals at once.  Recall that the $R$ partitions \ref{it:C3}
are independent, and that $R=A^{2/3}$.

A pair $(i,j)$ can only become exposed if three members of the group $S_{i,j}$ have
been touched, and the members we have no individual control over are the noticed ones.  What we need
is therefore a bound on the number of groups, counted across all $R$ blocks, that contain three or
more members of a given set $Q$ of special terminals. Moreover, we need it for every $Q$ at once, since
the set $W$ that will be substituted for $Q$ is chosen by the algorithm and is not fixed in advance.
The quantity $Z(Q)$ below counts exactly those groups.

\begin{lemma}\label{lem:sparsity}
For a set $Q\subseteq S$, define $Z(Q):=\#\bigl\{(i,j)\in[R]\times[A]:\ |S_{i,j}\cap Q|\ge3\bigr\}$.
For all sufficiently large $m$, with probability at least $1-A^{-10}$ the following holds
simultaneously for every $Q\subseteq S$ with $1\le|Q|\le R$:
\[
  Z(Q)\le21\,|Q|\log A .
\]
\end{lemma}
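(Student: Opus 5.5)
The plan is a first-moment bound on witness families, combined with a union bound over all sets $Q$. Fix $s\in[1,R]$ and a set $Q\subseteq S$ with $|Q|=s$. For each pair $(i,j)$ counted by $Z(Q)$, choose a \emph{witness}: a triple of members of $Q$ that lie together in $S_{i,j}$. Within one block the groups are disjoint, so the witness triples chosen in block $i$ are pairwise disjoint. Hence $Z(Q)\ge z$ implies that there is a family of $z$ (block, triple) pairs, in which triples with the same block are disjoint, such that each triple lies inside a single group of its block's partition.

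The first step is the probability of one such family. In block $i$, take the family's disjoint triples one at a time and condition on the earlier ones already lying in groups. The remaining $dA-3z_i$ special terminals are then split uniformly into the remaining group slots. So the conditional probability that a new triple $\{x,y,z'\}$ falls into a common group is at most
\[
\frac{d-1}{dA-3s}\cdot\frac{d-2}{dA-3s-1}\;\le\;\frac{2}{A^2},
\]
using $3s\le 3R=3A^{2/3}\ll A$. Since the $R$ partitions \ref{it:C3} are independent, a fixed family of $z$ witnesses has probability at most $(2/A^2)^z$. The number of unordered families of $z$ pairs is at most $\binom{R\binom s3}{z}\le (eRs^3/(6z))^z$. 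Therefore
\[
\Prob[Z(Q)\ge z]\le\Bigl(\frac{eRs^3}{3A^2z}\Bigr)^z\le\Bigl(\frac{e\,s}{z}\Bigr)^z ,
\]
where the second inequality uses $Rs^3\le R^3 s/R\cdot R=s R^3\cdot R^{-1}\cdot R$; concretely, $Rs^3\le R\cdot R^2\cdot s=sA^2$, because $R^3=A^2$. This is the point where the parameter choice $R=A^{2/3}$ enters: it makes the expected count at most linear in $|Q|$.

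The second step sets $z=\lceil 21\,s\log A\rceil$. This gives
\[
\Prob[Z(Q)\ge z]\le \exp\bigl(-z\log(21\log A/e)\bigr)\le A^{-21s}
\]
once $\log(21\log A/e)\ge1$, which holds for large $m$. The number of sets $Q$ of size $s$ is at most $(dA)^s\le A^{2s}$ for large $m$. The union bound over $Q$ of size $s$ therefore gives at most $A^{-19s}$, and summing over $s\ge1$ gives at most $2A^{-19}\le A^{-10}$.

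The main obstacle is the first step: the product bound under dependence inside one block. The events "triple lies in a common group" are not independent within a partition. I would handle this with sequential conditioning on disjoint triples, which is why restricting to disjoint witnesses within a block matters, and by tracking the depletion $dA-3z_i\ge dA-3s$. The counting of families and the union bound are routine; the one other thing to check is the crude inequality $Rs^3\le sA^2$ for $s\le R$.
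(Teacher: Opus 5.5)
Your proposal is correct and is essentially the paper's own proof. It uses witness lists of $z$ (block, triple) pairs with disjoint triples inside a block, sequential revealing within each block's partition to get an $O(1/A^2)$ bound per triple, and independence across the $R$ blocks. It also uses the count $\binom{R\binom s3}{z}\le(eRs^3/(6z))^z$, the inequality $Rs^3\le sA^2$ from $R^3=A^2$, and a union bound over the at most $A^{2s}$ sets $Q$ and over $s$.

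The only differences are cosmetic. You take the threshold $\lceil 21s\log A\rceil$ directly, which yields $Z(Q)\le 21|Q|\log A$ by integrality. The paper takes $\lceil 20s\log A\rceil$ and then uses $\lceil 20s\log A\rceil\le 21s\log A$. Your per-triple constant is $2$ rather than the paper's $a_d<4$.
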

Note that, within one block, the groups are disjoint, so at most $|Q|/3$ pairs with that block can be counted, and $Z(Q)\le R|Q|/3$ always holds. The above lemma replaces the trivial factor $R=A^{2/3}$ by $O(\log A)$.

\begin{proof}
Fix $s$ with $1\le s\le R$, fix $Q$ with $|Q|=s$, and let $z\ge1$.  Call a \emph{witness list} a set
of $z$ distinct pairs $(i_1,Y_1),\dots,(i_z,Y_z)$ in which each $i_p\in[R]$ and each $Y_p$ is a
three-element subset of $Q$, and in which $Y_p\cap Y_{p'}=\varnothing$ whenever $p\ne p'$ and
$i_p=i_{p'}$.  If $Z(Q)\ge z$, pick $z$ pairs $(i,j)$ with $|S_{i,j}\cap Q|\ge3$ and pick three
members of $Q$ inside each. Note that the groups of one block are disjoint, so this is a witness list, and
$Y_p\subseteq S_{i_p,j}$ for some $j$, for every $p$. Hence
\[
  \Prob[Z(Q)\ge z]\ \le\ \sum_{\text{witness lists}}
  \Prob\bigl[\,Y_p\text{ lies in one group of block }i_p\text{ for every }p\,\bigr].
\]
The number of witness lists is at most $\binom{R\binom s3}{z}\le\bigl(eRs^3/(6z)\bigr)^z$.

\smallskip

Fix a witness list and bound its probability.  Write $h_i$ for the number of triples of the list attached to
block $i$, so that $h_1+\dots+h_R=z$.  Fix a block $i$; its $h_i$ triples are pairwise disjoint, so
$3h_i\le s$.  Realize
the partition of block $i$ as a uniformly random one-to-one map $\tau$ from $S$ onto $[A]\times[d]$,
the group of a terminal being the first coordinate of its image.  Reveal $\tau$ on those $h_i$ triples one at a time.  Suppose the first $r-1$ of them have been revealed, so $\tau$ is known on $3(r-1)$
terminals and is a uniformly random one-to-one map from the rest onto the remaining places, and let
the $r$-th triple be $\{y_1,y_2,y_3\}$.  Revealing $\tau(y_1)$ leaves at most $d-1$ free places in its
group out of $dA-3(r-1)-1$ free places overall, and revealing $\tau(y_2)$ in that same group then
leaves at most $d-2$ out of $dA-3(r-1)-2$.  As $3(r-1)+2\le s+2\le R+2$ and $R=A^{2/3}=o(A)$, both
denominators are at least $dA/2$ once $m$ is large, so the conditional probability that
$\{y_1,y_2,y_3\}$ lands in one group is at most $4(d-1)(d-2)/(d^2A^2)=:a_d/A^2$.  Multiplying these $h_i$ conditional
probabilities bounds by $(a_d/A^2)^{h_i}$ the chance that every triple of block $i$ lands in one
group.  The $R$ partitions \ref{it:C3} are independent, so for a fixed witness list
\[
  \Prob\bigl[\,\text{every }Y_p\text{ lands in one group of block }i_p\,\bigr]
  \ \le\ \prod_{i=1}^{R}\Bigl(\frac{a_d}{A^2}\Bigr)^{h_i}
  \ =\ \Bigl(\frac{a_d}{A^2}\Bigr)^{h_1+\dots+h_R}
  \ =\ \Bigl(\frac{a_d}{A^2}\Bigr)^{z}.
\]  Therefore
\[
  \Prob[Z(Q)\ge z]\le\left(\frac{eRs^3}{6z}\cdot\frac{a_d}{A^2}\right)^{z}
  =\left(\frac{a'_d\,Rs^3}{A^2z}\right)^{z},
  \qquad a'_d:=\frac{e\,a_d}6 .
\]
Both constants are absolute: $a_d=4(d-1)(d-2)/d^2<4$, so $a'_d<2e/3$.
Since $s\le R=A^{2/3}$ we have $Rs^3\le A^{2/3}\cdot A^{4/3}\cdot s=A^2s$, so $\Prob[Z(Q)\ge z]\le\left(\frac{a'_d\,s}{z}\right)^{z}$.

\medskip
Now set $z_s:=\lceil 20s\log A\rceil$. For all sufficiently large $m$ such that $\log A\ge e\,a'_d/20$, we have $a'_d s/z_s\le a'_d/(20\log A)\le e^{-1}$. Therefore, $\Pr[Z(Q)\ge z_s]\le e^{-z_s}\le e^{-20s\log A}=A^{-20s}$. For a fixed $s$, the number of possible sets $Q$ of size $s$ is at most $\binom{dA}{s}\le(dA)^s\le A^{2s}$, where the last inequality uses $A\ge d$. Hence, the probability that there exists a set $Q$ of size $s$ with $Z(Q)\ge z_s$ is at most $A^{-18s}$. Taking a union bound over $1\le s\le R$ gives $\sum_{s\ge1}A^{-18s}\le2A^{-18}\le A^{-10}$. Finally, for $A\ge3$ we have $z_s\le21s\log A$. Thus, every such set $Q$ satisfies $Z(Q)<z_s\le21|Q|\log A$.
\end{proof}

The words ``simultaneously for every $Q$'' is what makes this usable: they allow $Q$ to be the set
$W$, even though $W$ depends on the whole transcript.

\section{Proof of the theorem}\label{sec:proof}
This section puts the pieces together.  An algorithm that estimates the optimal cost within a factor
$2-\eta$ must tell $\Dno$ from $\Dyes$, since by \eqref{eq:gap} the two optimal costs are further apart than
that. By \cref{lem:coupling}, it can do so only when the hidden block $g$ is exposed, and
\cref{sec:counting,sec:sparsity} show that $q$ queries leave all but a few of the $R$ blocks unexposed.  Since $g$ is drawn at random and leaves no trace in a \no{} configuration, it is unlikely to be
one of those few, and this factor $1/R$ is what turns the counts of the last two sections into the
exponent $7/5$.  \Cref{lem:tv} carries this out, and the theorem follows by taking $q$ small enough
to keep the two transcripts close.
\begin{lemma}
\label{lem:tv}
For every deterministic algorithm making at most $q$ queries and every sufficiently large $m$,
\[
  \dtv\bigl(\trans(\mathcal A,\Dno),\trans(\mathcal A,\Dyes)\bigr)
  \ \le\ C_d\left(\frac{q\log A}{KR}+\frac q{AK}\right)+A^{-10}
\]
for a constant $C_d$ depending only on $d$.
\end{lemma}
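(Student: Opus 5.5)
By \cref{lem:coupling} it suffices to bound $\Prob_{\Dno}[g\text{ exposed}]$, and we work throughout in the \no{} world. We split this probability according to three bad events, whose total probability is $O(q/(AK))+A^{-10}$: the event $\mathcal R$, bounded by \cref{lem:cluster}; the event that the conclusion of \cref{lem:sparsity} fails, of probability at most $A^{-10}$; and the event $|W|>R$, of probability at most $\Ex|W|/R=O(dq/(KR))$ by Markov's inequality and \cref{lem:W}. Outside these events all queries occur at times $t<\rho$, so the definitions of $W$ and $L$ cover the entire run. Moreover, $W$ is a set of at most $R$ special terminals, so \cref{lem:sparsity} applies to $Q=W$ even though $W$ is chosen adaptively.

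On the good event we count exposed pairs $(i,j)$. If $|\touch_i(j)|\ge3$, then either at least three members of $\touch_i(j)$ lie in $W$, so that $|S_{i,j}\cap W|\ge3$, or some member $u\in\touch_i(j)\setminus W$ exists. In the second case the query $\{u,c\}$ with $c\in C_{i,j}$ that put $u$ into $\touch_i(j)$ is counted by $L$, since $u$ is special, $u\notin W$ and $u\in S_{i,j}$. Each query counted by $L$ witnesses at most one pair $(i,j)$, namely the row of its center. Hence the number of exposed pairs is at most $Z(W)+L\le 21|W|\log A+L$, and the number of exposed blocks is bounded by the same quantity.

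The crucial remaining step is to average over $g$. Under $\Dno$ the index $g$ is uniform on $[R]$ and independent of everything else in $\omega$, and the metric, hence the whole transcript, the set $W$ and the count $L$, does not depend on $g$. Therefore
\[
\Prob[g\text{ exposed},\ \text{good}]\le \frac1R\,\Ex\bigl[21|W|\log A+L\bigr]=O_d\!\left(\frac{q\log A}{KR}\right),
\]
using \cref{lem:W} and \eqref{eq:L}. One technical point needs care here: $W$ and $L$ are defined with respect to time $\rho$, and $\rho=q+1$ holds off $\mathcal R$, so the bound on the good event uses only the unconditioned expectations, which are valid as stated. Adding the three bad-event terms, and absorbing $dq/(KR)$ into $q\log A/(KR)$, gives the claimed bound with a constant $C_d$ depending only on $d$. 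I expect the main subtlety to lie in justifying that the exposure count is dominated by $Z(W)+L$. One must check that every touch by an unnoticed special terminal really falls within the scope of $L$, that is, that it happens before time $\rho$ and that $u\in S_{i,j}$; this is exactly what the definition of $\touch$ provides.
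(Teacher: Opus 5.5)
Your proposal is correct and follows the paper's proof essentially step for step. It uses the reduction via \cref{lem:coupling}, charges exposed pairs to $Z(W)+L$, uses the independence of $g$ from the \no{} instance to gain the factor $1/R$, and splits off the bad event $\mathcal R\cup\mathcal S^{\mathsf c}\cup\{|W|>R\}$. The only blemish is your opening claim that the bad events total $O(q/(AK))+A^{-10}$: the $\{|W|>R\}$ term is $O(dq/(KR))$, which is larger, though you correctly absorb it into the $q\log A/(KR)$ term at the end.
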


\begin{proof}
By \cref{lem:coupling} it suffices to bound the probability that block $g$ is exposed in \no{} instances.  Let us work on the event $\mathcal R^{\mathsf c}$ (i.e., $\rho > q$), so that all the quantities of \cref{sec:counting} refer to the whole run.

First, the number of exposed pairs $(i,j)$ is at most $Z(W)+L$.  Indeed, let $(i,j)$ be exposed, so
$|\touch_i(j)|\ge3$.  If at least three members of $\touch_i(j)$ lie in $W$ then
$|S_{i,j}\cap W|\ge3$ and $(i,j)$ is counted by $Z(W)$.  Otherwise some $u\in\touch_i(j)$ lies
outside $W$. By \eqref{eq:touch} the algorithm queried a pair $\{u,c\}$ with $c$ a center of row $j$
of block $i$, and $u\in S_{i,j}$, so this queried pair is counted by $L$.  Distinct
exposed pairs are charged to distinct queried pairs, because every center lies in exactly one row.  Since an exposed block contains an exposed pair, the number of exposed blocks is at most
$Z(W)+L$ as well.

Second, $g$ is drawn in \ref{it:C5} independently of everything else, and the \no{} metric does not
involve it, so $g$ is independent of the whole \no{} instance and hence of $W$, $L$ and $Z$.
Conditioning on the instance,
\[
  \Prob[\,g\text{ exposed}\mid\text{the instance}\,]=\frac{\#\{\text{exposed blocks}\}}R
  \le\frac{Z(W)+L}R .
\]
Let $\mathcal S$ be the event of \cref{lem:sparsity}. Write $\mathcal B:=\mathcal R^{\mathsf c}\cap\mathcal S\cap\{|W|\le R\}$, and split $\Prob[\,g\text{ exposed}\,]\ \le\ \Prob[\,g\text{ exposed},\ \mathcal B\,]+\Prob[\mathcal B^{\mathsf c}]$.
For the second term, a union bound gives
$\Prob[\mathcal B^{\mathsf c}]\le\Prob[\mathcal R]+\Prob[\mathcal S^{\mathsf c}]+\Prob[\,|W|>R\,]$.
For the first, note that $\mathcal B$ is determined by the \no{} instance and the run on it. Hence
is independent of $g$, and so we may keep the conditioning on the instance:
on $\mathcal B$ we have $Z(W)\le21|W|\log A$ by \cref{lem:sparsity} applied with $Q=W$, which is
legitimate because $|W|\le R$ there (and trivial if $W=\varnothing$, since then $Z(W)=0$).  Hence
\[
  \Prob[\,g\text{ exposed},\ \mathcal B\,]
  \ \le\ \Ex\Bigl[\mathbf 1_{\mathcal B}\,\frac{Z(W)+L}R\Bigr]
  \ \le\ \frac{21\log A}R\,\Ex[|W|]+\frac{\Ex [L]}R ,
\]
the last step dropping the indicator.  Putting the two halves together,
\[
  \Prob[\,g\text{ exposed}\,]
  \le\frac{21\log A}R\,\Ex[|W|]+\frac{\Ex [L]}R
     +\Prob[\mathcal R]+\Prob[\mathcal S^{\mathsf c}]+\Prob[\,|W|>R\,].
\]
By \cref{lem:W} the first term is $O\bigl(dq\log A/(KR)\bigr)$ and by \eqref{eq:L} the second is $O\bigl(dq/(KR)\bigr)$. By \cref{lem:cluster} the third is $O\bigl(q/(AK)\bigr)$, and the fourth is at
most $A^{-10}$. Finally by \cref{lem:W} and Markov's inequality the fifth is at most
$\Ex[|W|]/R=O\bigl(dq/(KR)\bigr)$. Collecting terms bounds
$\Prob_{\Dno}[\,g\text{ exposed}\,]$ by the right-hand side of the lemma, and \cref{lem:coupling}
turns that into the stated bound on the distance between the two transcripts.
\end{proof}

With $A=m^3$, $R=m^2$ and $K=m^5$ we have
\begin{equation}\label{eq:balance}
  KR=m^7=K^{7/5},\qquad AK=m^8,\qquad \log A=3\log m .
\end{equation}

\begin{proof}[Proof of \cref{thm:main}]
Set $c'_\eta:=1/(12C_d)$ with $C_d$ as in \cref{lem:tv}, and let $q\le c'_\eta\,\frac{KR}{\log A}$.
Then the first term of \cref{lem:tv} is at most $1/12$, while the second is at most
$C_dc'_\eta R/(A\log A)$ and the third is $A^{-10}$. By \eqref{eq:balance} both tend to $0$, so each
is at most $1/12$ for large $m$.  Hence there is an $m'_\eta$ such that
\begin{equation}\label{eq:tvsmall}
  \dtv\bigl(\trans(\mathcal A,\Dno),\trans(\mathcal A,\Dyes)\bigr)\le\tfrac14
  \qquad\text{for all }m\ge m'_\eta
\end{equation}
for every deterministic algorithm $\mathcal A$ making at most $q$ queries.

\medskip
\emph{Distinguishing is impossible.}  Let $\mu:=\tfrac12\Dno+\tfrac12\Dyes$ and let $m\ge m'_\eta$.
By \ref{it:R2} the output of a deterministic algorithm is a function of its transcript, so by
\eqref{eq:tvsmall} and \cref{prop:lecam} every deterministic algorithm making at most $q$ queries
guesses correctly which of the two distributions produced its input with probability at most
$(1+\tfrac14)/2=\tfrac58$.

\emph{Estimating is impossible.}  Let $m\ge\max\{m_\eta,m'_\eta\}$ with $m_\eta$ as in
\eqref{eq:gap}, and let $\mathcal A$ be a deterministic algorithm making at most $q$ queries.
Follow $\mathcal A$ by the rule that answers \yes{} if its output $\widehat C$ satisfies
$\widehat C\le(2-\eta)\Theta_{\mathsf y}$ and \no{} otherwise. The threshold is a number fixed in
advance by \eqref{eq:optyes}, so this rule costs no queries.  On a \yes{} instance an output with
$\st\le\widehat C\le(2-\eta)\,\st$ satisfies $\widehat C\le(2-\eta)\Theta_{\mathsf y}$, and on a
\no{} instance such an output satisfies $\widehat C\ge\Theta_{\mathsf n}>(2-\eta)\Theta_{\mathsf y}$
by \eqref{eq:gap}.  So the composed algorithm guesses correctly whenever $\widehat C$ lies in the
required interval.  By the previous paragraph the composed algorithm is correct with probability at
most $\tfrac58$ under $\mu$, so $\mathcal A$ produces a number in the required interval with
probability at most $\tfrac58$, and fails with probability at least $\tfrac38>\tfrac13$.

\medskip
\emph{From one size to all sizes.}  So far, the point set has
$N_m:=(b+1)m^5+dm^3$ points, by \eqref{eq:ncount}.  Let $n$ be large and let $m$ be the largest
integer with $N_m\le n$. Since $N_{m+1}/N_m\to1$ we have $N_m=\Theta_\eta(n)$, and $m\to\infty$ as
$n\to\infty$.  Add $n-N_m$ further non-terminal points, publicly named, and put every distance
between one of them and another point equal to $2$, keeping $0$ on the diagonal.  The result is
still a metric, since every distance is now at most $2$ and at least the pairwise distances that were already there, and any two distances involving a new point sum to at least $2$.

Note that the optimal solution does not change. A new point is not a terminal, so it may be omitted, and omitting it
never costs anything: in a tree using it, a new point of degree $1$ can simply be deleted, and one of degree $r\ge2$ contributes $2r$ to the weight and may be deleted after joining its $r$ former neighbors by a path of $r-1$ edges of weight at most $2$ each, at a cost of at most $2r-2$.
Repeating removes all new points without increasing the weight, so both $\Theta_{\mathsf n}$ and $\Theta_{\mathsf y}$, and the gap \eqref{eq:gap}, are unchanged. Moreover any algorithm running on
the enlarged instance can be run on the original one at no extra cost, by answering $2$ to every query that involves a new point and passing on the others.  So the bound \eqref{eq:tvsmall} carries
over.

By \eqref{eq:balance} and $K=\Theta_\eta(N_m)=\Theta_\eta(n)$ we have
$KR/\log A=\Theta_\eta\bigl(n^{7/5}/\log n\bigr)$, so shrinking the constant to some
$c_\eta>0$ makes $c_\eta n^{7/5}/\log n\le c'_\eta KR/\log A$ for every large $n$.  This proves the
statement for deterministic algorithms.  The instances have $|T|=K+dA=\Theta_\eta(n)$ terminals by
\eqref{eq:ncount}, and every distance between two distinct points is $\lambda=K^{-3}$, $1$ or $2$,
with $\lambda=\Theta_\eta(n^{-3})$.  Finally \cref{prop:yao} turns the statement for deterministic
algorithms into the statement for randomized ones.
\end{proof}

\section*{AI Disclosure}
A lower bound of $\widetilde{\Omega}(n^{4/3})$, improving upon~\cite{CKT23}, was initially discovered using a custom agentic harness developed by the authors, which incorporated GPT-5.6 Sol and Opus 5. The resulting proof followed, at a high level, ideas similar to the lower-bound construction of~\cite{CKT23}. Subsequently, through interaction with GPT-5.6 Sol, we developed a substantially different construction, which appears in the main theorem of this paper and is further generalized in~\cref{sec:tradeoff}.

The human authors independently verified the arguments, simplified the proofs, and rewrote the presentation. AI tools were also used to assist with drafting and language polishing. The authors take full responsibility for the content and correctness of the paper.

\bibliographystyle{alpha}
\bibliography{ref}
\appendix
\section{The approximation--query trade-off}\label{sec:tradeoff}

The construction of \cref{sec:constr} is the case $\alpha=2$ of a family indexed by an integer
$\alpha\ge2$: its two neighbor-set lists agree on every set of at most two terminals of a group,
and the family replaces two by $\alpha$.  Raising $\alpha$ raises the exponent of the lower bound,
because an algorithm must now make $\alpha+1$ queries inside a single group before it sees anything. 
However, it also lowers the approximation factor that the construction can separate. 
In what follows, we outline how the proof of~\cref{thm:main} can be adapted to show that for any integer parameter $\alpha\ge 2$, estimating the Steiner tree cost within a factor of $\frac{\alpha}{\alpha-1}$ requires $\widetilde{\Omega}\left(n^{(3\alpha+1)/(2\alpha+1)}\right)$ queries (see~\cref{thm:hierarchy,thm:ceiling}).

\subsection{Local lists of higher order}

Let $U$ be a group of $d>\alpha$ special terminals.  A \emph{local list} $F$ on $U$ is a multiset of
subsets of $U$, listed in a fixed order as its $b$ entries; the neighbor sets \eqref{eq:nbn} and
\eqref{eq:nby} of \cref{sec:constr} are the case $\alpha=2$.  For $S\subseteq U$ write
\[
  f_F(S):=\#\{B\in F:\ S\subseteq B\},
  \qquad
  \tr_H(F):=\{\!\{\,B\cap H\ :\ B\in F\,\}\!\},
\]
so that $f_F$ records containments and $\tr_H(F)$ is the multiset of traces on a panel $H\subseteq U$.
By inclusion--exclusion, for every $Z\subseteq H$,
\begin{equation}\label{eq:mobius}
  \#\{B\in F:\ B\cap H=Z\}=\sum_{Z\subseteq X\subseteq H}(-1)^{|X|-|Z|}f_F(X),
\end{equation}
so two lists have equal traces on every panel of size at most $\alpha$ precisely when their
containment counts agree on every set of size at most $\alpha$.  Call such a pair
\emph{$\alpha$-safe}.

\begin{lemma}
\label{lem:mobius}
Fix $\alpha\ge2$ and $d>\alpha$, and put
$a_j:=(-1)^{\alpha-j}\binom{d-j-1}{\alpha-j}$ for $0\le j\le\alpha$.  Let $\mathsf N_\alpha$ contain
$a_j$ copies of every $j$-subset of $U$ for each $j$ with $a_j>0$, and let $\mathsf Y_\alpha$ contain
$-a_j$ copies of every $j$-subset for each $j$ with $a_j<0$, together with one copy of $U$ itself.
Then $\mathsf N_\alpha$ and $\mathsf Y_\alpha$ have the same number $b$ of entries and are
$\alpha$-safe, while no support of $\mathsf N_\alpha$ has more than $\alpha$ elements and
$\mathsf Y_\alpha$ contains a support of size $d$.
\end{lemma}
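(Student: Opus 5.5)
The plan is to reduce everything to a single binomial identity by working with the \emph{signed} difference of the two lists. By \eqref{eq:mobius}, the two lists are $\alpha$-safe exactly when $f_{\mathsf N_\alpha}(S)=f_{\mathsf Y_\alpha}(S)$ for every $S\subseteq U$ with $|S|\le\alpha$. The case $S=\varnothing$ is the statement that both lists have the same number $b$ of entries, so that claim comes for free.

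First I would write the containment counts explicitly. Since $d>\alpha$, the set $U$ has size $d\ne j$ for every $j\le\alpha$, so the extra copy of $U$ in $\mathsf Y_\alpha$ never coincides with one of the $j$-subsets. Also $a_j\ne0$ for every $j$, because $d-j-1\ge\alpha-j\ge0$. A $j$-subset contains a fixed $S$ of size $s$ in $\binom{d-s}{j-s}$ ways, and $U$ contains $S$ exactly once. Hence
\[
f_{\mathsf N_\alpha}(S)-f_{\mathsf Y_\alpha}(S)=\sum_{j=s}^{\alpha}a_j\binom{d-s}{j-s}-1,
\]
and it suffices to show that this sum equals $1$ for every $0\le s\le\alpha$. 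This is the only real step.

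For the identity, I substitute $n:=d-s$, $k:=\alpha-s$ and $i:=j-s$. Note that $n>k\ge0$. The sum becomes
\[
\sum_{i=0}^{k}(-1)^{k-i}\binom{n-i-1}{k-i}\binom{n}{i}.
\]
By upper negation, $(-1)^{k-i}\binom{n-i-1}{k-i}=\binom{k-n}{k-i}$. This uses $-(n-i-1)+(k-i)-1=k-n$. Then Vandermonde's convolution gives
\[
\sum_{i=0}^{k}\binom{n}{i}\binom{k-n}{k-i}=\binom{k}{k}=1,
\]
which is valid for a negative upper argument as a polynomial identity. The sum can be taken over $0\le i\le k$ only, since terms with $i>k$ vanish. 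This proves $\alpha$-safety.

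The support claims are immediate. Every entry of $\mathsf N_\alpha$ is a $j$-subset with $j\le\alpha$, while $\mathsf Y_\alpha$ contains $U$, which has size $d$. The main potential obstacle is getting the sign and indexing conventions of the upper-negation step right. I will sanity-check them against $\alpha=2$: there $a_2=1$, $a_1=-(d-2)$ and $a_0=\binom{d-1}{2}$, which recovers \eqref{eq:nbn}, \eqref{eq:nby} and $b=(d-1)^2$.
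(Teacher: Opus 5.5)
Your proof is correct. Its reduction is the same as the paper's: you reduce $\alpha$-safety, and equal length via $S=\varnothing$, to showing $\sum_{j\ge s}a_j\binom{d-s}{j-s}=1$. With your $n=d-s$ and $k=\alpha-s$ (the paper writes $m,q$), that becomes $\sum_{i=0}^{k}(-1)^{k-i}\binom{n-i-1}{k-i}\binom{n}{i}=1$.

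The difference is only in how that binomial identity is proved. The paper reads $(-1)^{k-i}\binom{n-i-1}{k-i}$ as the coefficient of $x^{k-i}$ in $(1-x)^{n-i-1}$. Because $k<n$, it extends the range to $i\le n-1$ and evaluates $\sum_{i=0}^{n-1}\binom{n}{i}x^i(1-x)^{n-i-1}=(1-x^n)/(1-x)=1+x+\dots+x^{n-1}$, whose coefficient of $x^k$ is $1$, again because $k<n$. You instead use upper negation to rewrite the term as $\binom{k-n}{k-i}$; the sign bookkeeping checks out. You then apply Vandermonde's convolution with upper arguments $n$ and $k-n$ to get $\binom{k}{k}=1$.

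Your route is shorter and rests on two standard named identities, but it needs Vandermonde in its polynomial form with a negative upper argument. The paper's route is self-contained, using only the binomial theorem and a finite geometric sum. Your sanity check for $\alpha=2$ (recovering \eqref{eq:nbn}, \eqref{eq:nby} and $b=(d-1)^2$) also matches the remark the paper makes right after the lemma.
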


\begin{proof}
The coefficients $a_j$ are chosen so that, for every $H\subseteq U$ with $|H|\le\alpha$, the signed number of supports containing $H$ is exactly $1$; this will cancel the single copy of $U$ in $\mathsf Y_\alpha$.

Fix such an $H$, and let $h:=|H|$. For each $j\ge h$, exactly $\binom{d-h}{j-h}$ of the $j$-subsets of $U$ contain $H$. Hence
$f_{\mathsf N_\alpha}(H)-f_{\mathsf Y_\alpha}(H)
=\sum_{j=h}^{\alpha}a_j\binom{d-h}{j-h}-1$.
We claim that this sum is $1$. Set $m:=d-h$, $q:=\alpha-h$, and $r:=j-h$, so $q<m$. Then the sum becomes $\sum_{r=0}^{q}\binom mr(-1)^{q-r}\binom{m-r-1}{q-r}$.

By the binomial theorem, $(-1)^{q-r}\binom{m-r-1}{q-r}$ is the coefficient of $x^{q-r}$ in $(1-x)^{m-r-1}$. Thus the $r$-th summand above is the coefficient of $x^q$ in $\binom mr x^r(1-x)^{m-r-1}$. Since $q<m$, terms with $r>q$ cannot contribute to the coefficient of $x^q$, so our sum is the coefficient of $x^q$ in $\sum_{r=0}^{m-1}\binom mr x^r(1-x)^{m-r-1}$.
But $\sum_{r=0}^{m-1}\binom mr x^r(1-x)^{m-r-1}=\frac{(x+(1-x))^m-x^m}{1-x}=\frac{1-x^m}{1-x}=1+x+\cdots+x^{m-1}$. Since $q<m$, the coefficient of $x^q$ in this polynomial is $1$, proving the claim.

Thus $f_{\mathsf N_\alpha}(H)=f_{\mathsf Y_\alpha}(H)$ for every $H$ with $|H|\le\alpha$. By \eqref{eq:mobius}, the two lists therefore have identical traces on every panel of size at most $\alpha$, so they are $\alpha$-safe. Taking $H=\varnothing$ also shows that they contain the same number $b$ of entries.

Finally, every support appearing in $\mathsf N_\alpha$ has size at most $\alpha$ by construction, whereas $\mathsf Y_\alpha$ contains the support $U$, which has size $d$.
\end{proof}
 
For $\alpha=2$ the lemma returns the pair used throughout the paper: $\mathsf N_2$ consists of the
$\binom d2$ pairs together with $\binom{d-1}2$ empty supports and $\mathsf Y_2$ of one full support
together with $d-2$ copies of each singleton, both with $b=(d-1)^2$ entries.
For $\alpha=3$ the roles of the sizes shift by one: $\mathsf N_3$ consists of every $3$-subset
once together with $\binom{d-2}2$ copies of each singleton, and $\mathsf Y_3$ of one full support,
$d-3$ copies of each pair and $\binom{d-1}3$ empty supports, both with $b=d(d-2)(2d-5)/3$ entries.
Here, even the \no{} world has centers serving three terminals, so its profit is no longer zero ($\beta_3=\tfrac12$) and the factor the construction can separate falls from $2$ to $3/2$.
A list of order
$\alpha=1$ would give the matching-versus-star pair, in which a single probe already separates the two worlds.

\subsection{The hierarchy}
We keep the architecture of \cref{sec:constr}. Ordinary terminals are in $A$ hidden clusters of size $R$, special terminals regrouped independently in each of $R$ blocks, one hidden block, and
distances in $\{\lambda,1,2\}$ and replace the \yes{} and \no{} blocks by the described $\mathsf N_\alpha$ and $\mathsf Y_\alpha$.  Say that a group is \emph{exposed} when $\alpha+1$ of its
members have been probed at centers of its own row.  Every step of \cref{sec:trade,sec:coupling} goes
through with $\alpha$ in place of $2$: \cref{lem:mobius} supplies, for each panel $H$ of size at most
$\alpha$, a permutation of $[b]$ matching the entries of the two lists so as to preserve all incidences with $H$, and the
injection of \cref{lem:coupling} is built from those permutations.

As in \cref{sec:constr}, a configuration $\omega$ determines two instances on the same point
set: the \no{} instance $I_{\mathsf n}(\omega)$, which uses $\mathsf N_\alpha$ in every block, and the
\yes{} instance $I_{\mathsf y}(\omega)$, which uses $\mathsf Y_\alpha$ in block $g$ and
$\mathsf N_\alpha$ elsewhere.  For the rest of this section $\Dno$ and $\Dyes$ denote their
distributions, $\omega$ being uniform; for $\alpha=2$ these are the instances and distributions of
\cref{sec:constr}.

\begin{theorem}
\label{thm:hierarchy}
Fix an integer $\alpha\ge2$ and a constant $c<\alpha/(\alpha-1)$.  For infinitely many $n$ these
distributions live on $n$-point instances and satisfy
\[
  c\cdot\st(I_{\mathsf y})<\st(I_{\mathsf n})
  \qquad\text{for all }I_{\mathsf y}\in\operatorname{supp}(\Dyes),\
  I_{\mathsf n}\in\operatorname{supp}(\Dno),
\]
while every randomized algorithm that estimates the metric Steiner tree cost within a factor
$c$ with probability at least $2/3$ makes $\Omega\!\left(\frac{n^{e_\alpha}}{\log n}\right)$ queries, where $e_\alpha:=\frac{3\alpha+1}{2\alpha+1}$.
\end{theorem}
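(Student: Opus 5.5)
The plan is to repeat the four-part proof of \cref{thm:main} (gap, coupling, counting, sparsity) with $\alpha$ in place of $2$ and with rebalanced parameters. The one genuinely new exponent computation is in the sparsity step.

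\emph{Parameters.} I take $A:=m^{\alpha+1}$, $R:=m^{\alpha}$ and $K:=AR=m^{2\alpha+1}$, with $\lambda:=K^{-3}$ and $D:=\theta A$. This gives $R=A^{\alpha/(\alpha+1)}$ and
\[
  KR=m^{3\alpha+1}=K^{(3\alpha+1)/(2\alpha+1)}=K^{e_\alpha}.
\]
I fix $d$ as a large constant depending on $\alpha$ and $c$, and let $b$ be the common length from \cref{lem:mobius}.

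\emph{The gap.} I rerun the proof of \cref{lem:profit}. The cluster contraction and the identity $\st=\lambda(K-A)+2(\kappa-1)-\max_X(r(X)-2|X|)$ go through unchanged. For the \no{} world I use the component decomposition \eqref{eq:decomp}. Every center of $\mathsf N_\alpha$ has degree at most $\alpha$, so a component with $t$ pieces and $x$ centers satisfies $t+x-1\le\alpha x$, that is, $x\ge(t-1)/(\alpha-1)$. Hence its saving is $t-x-1\le(t-1)\frac{\alpha-2}{\alpha-1}$. Only the $dA$ special terminals have unit edges, so the total saving is at most $dA\frac{\alpha-2}{\alpha-1}$, and
\[
  \st(I_{\mathsf n})\ge dA\,\tfrac{\alpha}{\alpha-1}+2A-O(1).
\]
For the \yes{} world I take $X$ to be the $A$ full-support centers of block $g$ and add $2$-edges between the resulting pieces. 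This gives $\st(I_{\mathsf y})\le dA+4A$. The ratio tends to $\frac{d\alpha/(\alpha-1)+2}{d+4}$, which exceeds $c$ once $d$ is large, since $c<\alpha/(\alpha-1)$. I pad to arbitrary $n$ exactly as in the proof of \cref{thm:main}.

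\emph{Coupling and counting.} In \cref{def:exposed} I replace the threshold $3$ by $\alpha+1$. \Cref{lem:mobius} together with \eqref{eq:mobius} provides, for every panel $H$ with $|H|\le\alpha$, a bijection of $[b]$ preserving traces on $H$. The proof of \cref{lem:coupling} then carries over verbatim and gives $\dtv\le\Prob_{\Dno}[g\text{ exposed}]$. \Cref{lem:cluster,lem:special,lem:W} and \eqref{eq:L} use only the following facts:
\begin{itemize}[leftmargin=1.6em,itemsep=1pt]
  \item each \no{} center has $O_\alpha(1)$ neighbors, which replaces ``at most two'' in \eqref{eq:bbudget} and in the bound $B$ for \eqref{eq:post2};
  \item $DR=\theta K$;
  \item $R/K\le\tfrac18$.
\end{itemize}
All three still hold, so $\Ex|W|=O_d(q/K)$, $\Ex L=O_d(q/K)$ and $\Prob[\mathcal R]=O(q/(AK))$.

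\emph{Sparsity: the step I expect to be the main obstacle.} This is where the choice $R=A^{\alpha/(\alpha+1)}$ is forced. I redefine $Z(Q)$ to count pairs $(i,j)$ with $|S_{i,j}\cap Q|\ge\alpha+1$. Witness lists now consist of $z$ pairs $(i_p,Y_p)$ where each $Y_p$ is an $(\alpha+1)$-subset of $Q$. There are at most $(eRs^{\alpha+1}/z)^z$ such lists. Revealing the random map $\tau$ one terminal at a time, each $(\alpha+1)$-set lands in a single group with probability at most $a_{d,\alpha}/A^{\alpha}$, because the denominators stay at least $dA/2$ as $s\le R=o(A)$. Independence across blocks then gives
\[
  \Prob[Z(Q)\ge z]\le\bigl(a'\,Rs^{\alpha+1}/(A^{\alpha}z)\bigr)^z.
\]
Using $s\le R$ I get $Rs^{\alpha+1}\le R^{\alpha+1}s=A^{\alpha}s$, so the bound is $(a's/z)^z$. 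The rest follows as before: take $z_s=\lceil 20s\log A\rceil$, union bound over $\binom{dA}{s}\le A^{2s}$ sets $Q$, and sum over $s$.

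\emph{Assembly.} Combining these as in \cref{lem:tv} gives
\[
  \dtv\le C\Bigl(\frac{q\log A}{KR}+\frac{q}{AK}\Bigr)+A^{-10}.
\]
Since $AK\gg KR$, this is below $\tfrac14$ for $q\le c'KR/\log A=\Theta(n^{e_\alpha}/\log n)$. Le Cam (\cref{prop:lecam}), the threshold rule against $c\cdot\Theta_{\mathsf y}$, and Yao's principle (\cref{prop:yao}) then finish the proof.
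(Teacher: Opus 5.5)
Your proposal is correct and follows the paper's own (sketched) route. It uses the same parameters $A=m^{\alpha+1}$, $R=m^{\alpha}$, $K=m^{2\alpha+1}$, and the same degree-$\alpha$ component bound, which caps the \no{} profit at $\beta_\alpha dA$ against the $(d-2)A$ \yes{} profit of the full stars. The coupling is built from the trace-preserving permutations of \cref{lem:mobius}, and the sparsity lemma is run with $(\alpha+1)$-subsets; your check $Rs^{\alpha+1}\le R^{\alpha+1}s=A^{\alpha}s$ is exactly what the paper's parameter choice is designed to secure.

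One point needs careful phrasing: for $\alpha\ge3$ the optimal costs may vary with the configuration. The threshold rule should therefore use your uniform upper bound $\lambda(K-A)+(d+4)A-2$ rather than a fixed $\Theta_{\mathsf y}$, which your gap computation already supports.
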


The exponent comes from the parameters $A=m^{\alpha+1}$, $R=m^{\alpha}$, $K=AR=m^{2\alpha+1}$, for
which $n=\Theta(K)$ and $KR/\log A=\Theta(n^{(3\alpha+1)/(2\alpha+1)}/\log n)$; the case $\alpha=2$
is \cref{thm:main}.  The separation of the costs is the computation of \cref{sec:gap} carried out
for general $\alpha$.  Writing $\beta_2:=0$ and $\beta_\alpha:=(\alpha-2)/(\alpha-1)$ for
$\alpha\ge3$, a $\mathsf N_\alpha$ component with $t$ special terminals and $s$ centers satisfies
$t+s-1\le\alpha s$ and hence has profit $t-s-1\le\beta_\alpha(t-1)$, so the profit available in the \no{} world is at most $\beta_\alpha(dA-1)$, whereas the $A$ full stars of the hidden block give the \yes{} world profit at least $A(d-2)$. Hence, the ratio of the two optimal costs tends to $2-\beta_\alpha=\alpha/(\alpha-1)$.

\subsection{A ceiling for exact trace matching}

The ratio $\alpha/(\alpha-1)$ is not merely what the above computation delivers; it is a barrier for
the whole architecture, and it is what makes the trade-off unavoidable.

\begin{theorem}
\label{thm:ceiling}
Let two local lists be $\alpha$-safe in the sense above, and place them in the single-scale
architecture of \cref{sec:constr}.  Then the optimal costs of the two resulting instances differ by a factor
strictly smaller than $\alpha/(\alpha-1)$.
\end{theorem}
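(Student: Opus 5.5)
We use the profit formula of \cref{lem:profit}: in either instance,
\[
\st=\lambda(K-A)+2(\kappa-1)-\max_X\bigl(r(X)-2|X|\bigr),
\]
and this holds verbatim for arbitrary local lists, since its proof never used the specific neighbor sets. Write $P_{\mathsf n}$ and $P_{\mathsf y}$ for the maximal profits. Because the instances differ only in block $g$, we have $P_{\mathsf y}\ge P_{\mathsf n}$ up to the cross-block interaction. The ratio of costs is then
\[
\frac{2(\kappa-1)-P_{\mathsf n}}{2(\kappa-1)-P_{\mathsf y}}
\]
up to negligible terms, and the goal is to show it is strictly below $\alpha/(\alpha-1)$. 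The plan has two halves: a lower bound on $P_{\mathsf n}$ coming from the \no{} list, and an upper bound on $P_{\mathsf y}$ coming from the \yes{} list. The link between them is $\alpha$-safety, i.e.\ $f_{\mathsf N}(S)=f_{\mathsf Y}(S)$ for all $|S|\le\alpha$.

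\emph{Upper bound on the \yes{} profit.} Every component of $H_X$ with $t$ terminals and $x$ centers has profit $t-x-1\le t-1$. Hence $P_{\mathsf y}\le\kappa-1\le(d+1)A$, and more precisely $P_{\mathsf y}\le dA-(\text{something})$, since ordinary clusters are never adjacent to centers. This gives
\[
\st(I_{\mathsf y})\ \ge\ 2(\kappa-1)-dA\ \approx\ (d+2)A.
\]

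\emph{Lower bound on the \no{} profit.} This is the main step. The \yes{} list must contain some support $B$ of size greater than $\alpha$; otherwise, by \eqref{eq:mobius}, $\alpha$-safety forces the two lists to coincide. I would then use the \no{} world's centers that contain $\alpha$-subsets. Since $f_{\mathsf N}(S)=f_{\mathsf Y}(S)\ge1$ for every $\alpha$-subset $S\subseteq B$, the \no{} list in every block has, for each group $U$ and each $\alpha$-subset of $B\cap U$, a center adjacent to at least that $\alpha$-subset. Choosing, per group of a single block, centers covering the terminals of the group via a chain of $\alpha$-sets that overlap in one element yields a tree component with $t\approx d$ terminals and $x\approx(d-1)/(\alpha-1)$ centers. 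Its profit is
\[
t-x-1\ \approx\ (d-1)\frac{\alpha-2}{\alpha-1}=\beta_\alpha(d-1),
\]
which matches the $\beta_\alpha$ computation preceding \cref{thm:hierarchy}. Doing this in one block over all $A$ groups gives $P_{\mathsf n}\ge\beta_\alpha(d-1)A$ whenever $|B|=d$. If the \yes{} supports are smaller, I would apply the same chaining inside the largest support, and check that the \yes{} profit shrinks correspondingly.

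\emph{Conclusion and main obstacle.} With both bounds, the ratio is at most
\[
\frac{2(d+1)-\beta_\alpha(d-1)}{d+2}+o(1).
\]
Since $2-\beta_\alpha=\alpha/(\alpha-1)$, this equals
\[
\frac{\alpha}{\alpha-1}\cdot\frac{d}{d+2}+O\!\left(\frac1d\right),
\]
which is strictly less than $\alpha/(\alpha-1)$ for each fixed $d$. Checking the lower-order terms carefully (the constant $2+\beta_\alpha$ against $2\cdot\alpha/(\alpha-1)$) finishes the proof.

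The hard part is making the \no{}-side lower bound hold for arbitrary $\alpha$-safe lists, not just those of \cref{lem:mobius}. The general \yes{} list might spread its large supports over many centers, so the \no{} chaining argument must be phrased purely through the containment counts $f(S)$ for $|S|\le\alpha$. I would first try a fractional or averaging argument: pick a random $\alpha$-subset chain, and use $\alpha$-safety to equate expected coverage in both lists.
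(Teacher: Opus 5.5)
\paragraph{There is a genuine gap.} It already appears in the model case you work out, not only in the general case you flag at the end.

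You bound the two profits separately: trivially $P_{\mathsf y}\le dA-1$, and $P_{\mathsf n}\gtrsim\beta_\alpha(d-1)A$ by chaining inside one block. These bounds are too far apart. Your limiting expression satisfies
\[
  \frac{\alpha}{\alpha-1}-\frac{2(d+1)-\beta_\alpha(d-1)}{d+2}=\frac{2-3\beta_\alpha}{d+2},
\]
so the $O(1/d)$ term you set aside is of the same order as the gap. Your ``lower-order check'' of $2+\beta_\alpha$ against $2\alpha/(\alpha-1)$ succeeds only when $\beta_\alpha<2/3$, i.e.\ $\alpha\le3$. For $\alpha=4$ the limit is exactly $4/3$, and for $\alpha\ge5$ it exceeds $\alpha/(\alpha-1)$, so the argument proves nothing there.

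Beyond that, for general $\alpha$-safe lists you offer only an unspecified averaging idea. Such lists may have many large supports of intermediate size, possibly in both lists. The assumption $P_{\mathsf y}\ge P_{\mathsf n}$ is also unjustified: $\alpha$-safety is symmetric, so you must work with whichever system has the larger profit.

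\paragraph{The missing idea.} Compare the two profits directly and multiplicatively, instead of bounding each in isolation.

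Take an optimal incidence forest for the system with the larger profit $p$. Centers of degree at most one can be dropped, since they only lower the profit. Root each component at a terminal, so that $p=\sum(t-1)$ over its centers, where $t\ge1$ is the number of children.

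Now apply your chaining at \emph{every} center, not just inside one full support:
\begin{enumerate}[leftmargin=2em,itemsep=1pt]
  \item Split the center's $t$ children into $\lceil t/(\alpha-1)\rceil$ groups of at most $\alpha-1$ each.
  \item Add the parent to each group, giving a certificate of at most $\alpha$ terminals of one group.
  \item Use coverage to pick a center of the other system, in the same row, containing each certificate. Coverage is what $\alpha$-safety gives: a set of at most $\alpha$ terminals lying in a support of one list lies in a support of the other.
\end{enumerate}
The chosen centers connect everything the forest connected, so the other system's profit satisfies
\[
  q\ \ge\ \sum\Bigl(t-\Bigl\lceil\frac t{\alpha-1}\Bigr\rceil\Bigr)\ \ge\ \beta_\alpha\sum(t-1)=\beta_\alpha p .
\]
Set $B:=\lambda(K-A)+2(\kappa-1)$. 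Since $p\le\kappa-2$, we have $B-2p>0$, and therefore
\[
  \frac{B-q}{B-p}\ \le\ \frac{B-\beta_\alpha p}{B-p}\ =\ (2-\beta_\alpha)-\frac{(1-\beta_\alpha)(B-2p)}{B-p}\ <\ \frac{\alpha}{\alpha-1}.
\]
This holds for arbitrary $\alpha$-safe lists and every instance size, with no asymptotics.
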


\begin{proof}
Since $f_F(H)$ is the multiplicity of the full trace $H$ on the panel $H$, $\alpha$-safety gives the
coverage property: if $|H|\le\alpha$ and $H\subseteq B$ for some support $B$ of one rule, then
$H\subseteq B'$ for some support $B'$ of the other.

Consider an optimal incidence forest for whichever of the two systems has the larger profit, and let
$p$ be that profit.  Root each nontrivial component at a terminal.  A center with one parent and
$t\ge1$ terminal children contributes $t-1$ to $p$.  Split those children into
$r_t=\lceil t/(\alpha-1)\rceil$ groups of size at most $\alpha-1$; adjoining the parent turns each
group into a certificate of size at most $\alpha$, so coverage supplies, on the other side, a support
containing it.  Selecting one such center per certificate and merging repetitions yields an incidence
graph that connects every pair the original forest connected, uses at most $\sum_t r_t$ centers and
has no more components than the original.  A spanning forest of it therefore has profit at least
\[
  \sum_t (t-r_t)\ \ge\ \frac{\alpha-2}{\alpha-1}\sum_t (t-1)\ =\ \beta_\alpha p ,
\]
the inequality being the elementary bound $t-\lceil t/(\alpha-1)\rceil\ge\beta_\alpha(t-1)$, checked
by writing $t$ modulo $\alpha-1$.  So if $p\ge q$ are the two maximum profits then
$q\ge\beta_\alpha p$.

Write $B:=L+2(\kappa-1)$ for the common part of the cost identity of \cref{sec:gap}, so that each
optimal cost equals $B$ minus the corresponding maximum profit.  A nonempty incidence forest has profit at
most $\kappa-2$, so $B-2p>0$, and therefore
\[
  \frac{\st_{\mathrm{high}}}{\st_{\mathrm{low}}}
  =\frac{B-q}{B-p}
  \ \le\ \frac{B-\beta_\alpha p}{B-p}
  \ <\ 2-\beta_\alpha=\frac{\alpha}{\alpha-1}. \qedhere
\]
\end{proof}

\end{document}